\newcommand{\hk}{Hegselmann--Krause~}
\title[$2R$-Conjecture for the H-K Model]{The $2R$-Conjecture for the Hegselmann--Krause Model: A Proof in Expectation and New Directions}
\author[Dey, Etesami \and Gopalan]{Partha S.~Dey \and S. Rasoul Etesami \and Aditya S.~Gopalan}
\date{\today}
\begin{document}

\begin{abstract}
\hk models are localized, distributed averaging dynamics on spatial data. A key aspect of these dynamics is that they lead to cluster formation, which has important applications in geographic information systems, dynamic clustering algorithms, opinion dynamics, and social networks. For these models, the key questions are whether a fixed point exists and, if so, characterizing it. In this work, we establish new results towards the ``$2R$-Conjecture’’ for the \hk model, for which no meaningful progress, or even any precise statement, has been made since its introduction in 2007. This conjecture relates to the structure of the fixed point when there are a large number of agents per unit space. We provide, among other results, a proof in expectation and a statement of a stronger result that is supported by simulation. The key methodological contribution is to consider the dynamics as an infinite-dimensional problem on the space of point processes, rather than on finitely many points. This enables us to leverage stationarity, shift invariance, and certain other symmetries to obtain the results. These techniques do not have finite-dimensional analogs.
\end{abstract}
\maketitle


\section{Introduction}
Consensus formation is a central theme in a variety of applications in social networks, distributed computer systems, and geographic information systems. Generally, in a consensus problem, different agents interact with the goal of reaching some common state. Many of these settings are \emph{spatial}, in the sense that agents only interact with other agents who are close to them; agents themselves may also move throughout time. In this case, consensus can be thought of as agents finding a common location in space. Thus, two broad themes emerge: (1) finding conditions under which consensus occurs, and (2) understanding the spatial structure of where consensus occurs. There are many practical applications for consensus formation. The following are just a few examples:

\begin{enumerate}[left=0pt, label=\emph{\alph*}.]
   \item  \textit{Robotic Rendezvous:} Consider a set of robots with limited communication capabilities. Each robot can communicate only with others within its communication radius, and collectively, they aim to rendezvous at the same location by iteratively interacting and updating their positions~\cite{bullo2009distributed}. 

    \item  \textit{Opinion Dynamics:} Opinion formation in social networks is a significant area of research that spans multiple disciplines, including psychology, economics, and political science. A common question across these fields is the extent to which one can predict the outcome of opinion formation among individuals engaged in complex interactions. One such interaction arises when social entities exhibit bounded confidence, meaning their opinions are influenced only by others whose views are sufficiently close to their own, within a specific confidence threshold~\cite{hegselmann2002opinion}.

    \item  \textit{Dynamic Clustering:} Dynamic clustering refers to the process of identifying and tracking clusters or groups in data that evolve over time. Unlike static clustering, where the structure is fixed, dynamic clustering accounts for temporal changes—such as the appearance, disappearance, merging, or splitting of clusters—as new data points are introduced or as existing data shifts. This is particularly important in applications such as social networks, mobile robotics, financial markets, and sensor networks, where the underlying data distribution is constantly changing~\cite{aggarwal2003framework}.
\end{enumerate}

Motivated by the applications mentioned above, this article primarily focuses on the \hk model for opinion dynamics, specifically from the second theme (\ie\ understanding the spatial structure of where opinion clusters occur), since the first theme has been widely studied for this model. Briefly, the \hk model is a discrete-time dynamical system defined as follows (see Section~\ref{sec:model} for a detailed description of the model and its assumptions).
\begin{itemize}[left=0pt]
    \item Agents are initially located on the real line $\mathbb{R}$ according to a simple stationary renewal process of intensity $\lambda$.
    \item Each agent interacts with all other agents within its $R$-ball (i.e., all agents located at a distance of at most $R$ from it). Specifically, at each time step, each agent moves to the average location of all agents (including itself) within its $R$-ball.
    \item A set of agents is said to reach consensus if they are co-located and there are no other agents within their common $R$-ball.
\end{itemize}

In the remainder of this article, we will refer to the agents as points; this terminology emphasizes that we treat the dynamics as dynamics on point processes.

Originally, the \hk dynamics were proposed for a \emph{restricted} deterministic setting with finitely many agents located on a compact interval~\cite{hegselmann2002opinion}. In this restricted setting, it is well-known that the dynamics reaches a fixed point after some finite time. Despite the apparent simplicity of the dynamics, little is known about the structure of the fixed point, even in the restricted setting. In this work, we extend the \hk model to a stochastic setting with infinitely many points in order to address the well-known $2R$-Conjecture~\cite{blondel20072r}. The statement (from~\cite{blondel20072r}) of this conjecture is given below: 
\begin{conjecture}
    In the restricted setting, as the number of points $N$ increases to infinity, the gaps between the clusters in the fixed point are \emph{approximately} $2R$.
\end{conjecture}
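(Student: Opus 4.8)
\medskip
\noindent\textbf{Proof strategy.} The plan is to abandon the finite, compact-interval picture in favour of an honestly infinite one: regard the \hk update as a translation- and reflection-equivariant map on the space of locally finite multi-point configurations on $\bR$, and iterate it from an initial stationary renewal process $\Phi_0$ of intensity $\lambda$. Equivariance is the whole leverage --- every iterate $\Phi_t$, and hence any limit, inherits the stationarity and reflection invariance of $\Phi_0$, and there is no boundary to spoil it. The first task is to make the dynamics well-posed on this space and to show that $\Phi_t$ converges almost surely, say in the local vague topology, to some configuration $\Phi_\infty$. Two structural facts should survive the passage to infinity and do the work here: (i) the update preserves the left-to-right \emph{order} of the points, so each point has a well-defined random limit and clusters can never pass through one another; and (ii) on every fixed window a spread/center-of-mass functional is eventually frozen. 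Granting this, $\Phi_\infty$ is a stationary point process supported on isolated \emph{cluster} atoms $(\xi_k)$ carrying masses $(m_k)$, the total mass intensity is still $\lambda$, and the fixed-point equation forces consecutive atoms to lie at distance strictly greater than $R$.

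The second task is to collapse the ``in expectation'' claim to a single scalar. Let $\rho_\lambda$ be the intensity of the cluster process $\sum_k \delta_{\xi_k}$. By stationarity and the Palm inversion formula --- equivalently, the mass-transport principle --- the mean spacing between consecutive cluster atoms of $\Phi_\infty$ is exactly $1/\rho_\lambda$, with no renewal hypothesis on $\Phi_\infty$ itself. Thus the conjecture, read in expectation and in the dense regime, is precisely the statement
\[
\rho_\lambda \;\longrightarrow\; \tfrac{1}{2R} \qquad (\lambda\to\infty),
\]
i.e.\ the expected inter-cluster gap tends to $2R$; the stronger, simulation-supported assertion is that \emph{every} gap is within $o(1)$ of $2R$.

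The third and decisive task is to pin $\rho_\lambda$ to $1/(2R)$. The fixed-point condition alone yields only gaps $>R$: the factor of two is not a property of which configurations are fixed but of which fixed point the dynamics select out of a nearly uniform start, so the argument must reach into the transient merging. The route I would try: in the dense limit the points in any window are asymptotically equidistributed, so each $\xi_k$ is, up to $o(1)$, the \emph{midpoint} of its basin of attraction $B_k$ (an interval, by order preservation), and since the basins tile $\bR$ the typical gap equals the average of two consecutive basin widths, hence --- by stationarity --- the typical basin width itself; everything then reduces to showing the typical basin width converges to $2R$. The ``$\gtrsim 2R$'' half should come from combining the $>R$ separation with the midpoint identity; the ``$\lesssim 2R$'' half is the crux, and the step with no finite-dimensional analogue: one must show that the merging dynamics, launched from an asymptotically flat profile whose amplified fluctuations are what drive the clustering, cannot assemble a basin much wider than $2R$ before freezing. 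This is where I expect most of the fight to be --- and also precisely where the almost-sure, non-averaged form of the conjecture still escapes, which is why the honest deliverable is the result in expectation together with the stronger, simulation-backed conjecture.
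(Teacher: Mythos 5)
You set up essentially the framework the paper adopts: replace the compact interval by a stationary renewal process of intensity $\lambda$ on $\bR$, obtain almost-sure convergence to a limiting cluster configuration (the paper does this via a.s.\ finiteness of the connected components and finite-time termination inside each; your order-preservation/freezing sketch is vaguer but aims at the same fact), and reduce the in-expectation form of the conjecture to a single scalar, namely that the cluster intensity $\rho_\lambda$ tends to $1/(2R)$, using stationarity and the expected-gap/intensity duality (mean spacing $=1/\rho_\lambda$). Up to that point you and the paper agree, and you also correctly scope the honest deliverable to the in-expectation statement, which is all the paper proves (the full conjecture remains open there too, restated as the Strong $2R$-Conjecture).

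The genuine gap is in the one step that carries all of the quantitative content: actually identifying $\rho_\lambda$. Your proposed route --- basins of attraction tiling $\bR$, each cluster asymptotically the midpoint of its basin, typical gap equal to typical basin width, then basin width $\to 2R$ --- is left unexecuted, and by your own account the upper bound ``$\lesssim 2R$'' is unresolved; note moreover that the lower-bound sketch (separation $>R$ combined with the midpoint identity) only returns gap $>R$, not $\gtrsim 2R$, so neither half is delivered. The paper gets the factor of $2$ not from the geometry of the transient merging but from a conservation law: a mass-transport argument (shift-invariance makes the root uniformly placed in its component) shows that the expected number of points in the connected component containing the $0$-th point is constant in $t$ and equals $2\lambda+1$ (with $R=1$); combined with conservation of the point intensity $\lambda$ and a strong law of large numbers over the i.i.d.\ ``groupings'' of initial components, this pins the cluster intensity to exactly $\lambda/(2\lambda+1)$, hence an expected inter-cluster gap of exactly $2+1/\lambda$ for every $\lambda$, not merely in the $\lambda\to\infty$ limit. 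Without some such conserved quantity, your outline stops precisely where the proof has to begin.
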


We note that no precise formulation of the approximately $2R$ gap between clusters has appeared in the existing literature ~\cite{blondel20072r, bhattacharyya2013convergence, wedin2022mathematics, bernardo2024bounded}. However, our extension of the \hk model to an initial condition consisting of a stationary renewal process on $\bR$, instead of on a compact interval, facilitates the understanding of the fixed point structure.
As a consequence of the finite-time consensus in the restricted setting, as time tends to infinity, there is a (random) weak limit of the extended dynamics. This weak limit consists of clusters of co-located points, with no other points located within the common $R$-ball of co-located points. We leverage the shift-invariance and stationarity properties of the process at all times $t$ to compute the expected distance between these clusters. We also note that, in a particular sense, the restriction to a compact interval represents an atypical instance of the \hk dynamics.
For more detailed comments, we refer the reader to Remark~\ref{rem:interval-atypical}.

\subsection{Contributions}
Our contributions are as follows. 
Noting that taking $N \to \infty$ in the restricted setting is analogous to sending $\lambda \to \infty$ in the extended setting, we show that the expected gap between adjacent clusters in the weak limit converges to $2R$ as $\lambda \to \infty$. We refer to this result as the \emph{Weak $2R$-Conjecture}.
We also state, as a stronger result, the \emph{Strong $2R$-Conjecture}, which is supported by simulation evidence.
Roughly speaking, this conjecture is that the $\lambda\to\infty$ limit of the weak limit fixed points (as time $ t\to\infty$) is the stationary version of $2R\cdot \bZ$.

\subsection{Related Literature}
There is a large literature on the \hk model, for which we refer the reader to the recent survey article~\cite{bernardo2024bounded} and to the references therein. However, we specifically mention the following articles that place significant focus on the $2R$-conjecture for approximations of the \hk model~\cite{blondel20072r, blondel2010continuous, wang2017noisy}.
This paper is the first work to state and prove a result for the $2R$-Conjecture for the original \hk dynamics. 

We also mention some recent work in the applied probability literature related to spatial dynamics on point processes.
Resolving our Strong $2R$-Conjecture remains an open problem, and tools to prove related convergences are yet to be developed for any of these models~\cite{dragovic2020poisson, khaniha2025hierarchical, angel2023tale}.

\subsection{Organization}
The rest of this article is organized as follows.
In Section~\ref{sec:notation}, we provide some notations and preliminaries that are used throughout the paper. The model and main results are given in Section~\ref{sec:model}.
We prove the Weak $2R$-Conjecture in Section~\ref{sec:weak-2R-proof}.
Simulation evidence for the Strong $2R$-Conjecture is in Section~\ref{sec:strong-2R-evidence}.
Finally, we discuss some directions for future work in Section~\ref{sec:future-work}.

\section{Preliminaries and Notations}\label{sec:notation}

In this section, we first provide some definitions and preliminary results on point processes and geometric random graphs. We then give some notations that will be used throughout the paper.

\subsection{Preliminaries}
We will restrict any definitions to $\bR$ instead of the more general $\bR^n$ since that is the setting considered in this paper.
All results in this section are standard and will be stated without proof.

For this section only, let $(\Omega, \cF, \pr)$ be a probability space on the non-decreasing elements of $\bR^\bZ$.

\begin{definition}
    A \emph{point process} on $\bR$ is a random countable collection of points on $\bR$, which is, without loss of generality, a non-decreasing element of $\bR^\bZ$.
    A point process $\Xi$ is \emph{simple} if, for all $x \in \bR$, we have $\pr(\text{more than one point of $\Xi$ is at $x$}) = 0$.
    A point process $\Xi$ is \emph{stationary} if for any two sets $S_1, S_2 \subset \bR$ of equal and finite Lebesgue measure, the distributions of the number of points in $S_1$ and $S_2$ are equal.
\end{definition}

For a point process $\Xi$ on $\bR$, we will take as the $0$-indexed point the point of least positive location.
We will index points of greater location successively by $1, 2, 3, \ldots$, and points of lesser location successively by $-1, -2, -3, \ldots$.

\begin{definition}
    A simple point process $\Xi$ on $\bR$ is \emph{renewal} if the gaps between successive points are independent and identically distributed.
    A renewal process is \emph{Poisson} if the gaps between successive points are independent and identically distributed as $\mathrm{exp}(\lambda)$ for some fixed $\lambda > 0$.
\end{definition}

Let $N(t)$ be the number of points of a stationary point process $\Xi$ in $(0, t]$, for $0 \neq t \in \bR$.
\begin{definition}
    The \emph{intensity} of $\Xi$ is $\lambda := \lim_{t \to \pm\infty}\frac{\E N(t)}{t}$, provided it exists.
\end{definition}

In general, the intensity of a point process could be infinite or random.
In this paper, the intensity of any point process that we consider exists, is finite, and is deterministic.
It will be convenient to work with an alternative characterization of the intensity of a stationary point process. Proofs of the following propositions are standard; we omit them for brevity.

\begin{proposition}\label{prop:int}
    If the (finite and deterministic) intensity $\lambda$ of a stationary point process $\Xi$ exists, then $\lim_{t \to\pm \infty}\frac{N(t)}{t} = \lambda$ almost surely.
    \label{prop:intensity-empirical}
\end{proposition}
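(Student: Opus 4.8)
The plan is to recognize $N(t)/t$ as a time average for the shift on configuration space, apply the pointwise ergodic theorem, and then identify the (a priori random) limit using the hypothesis. By stationarity and the reflection invariance of $\bR$ it suffices to treat $t \to +\infty$, since the case $t \to -\infty$ follows by applying the statement to the reflected process $-\Xi$, which is again stationary with the same intensity. For an integer $n \ge 1$ I would write $N(n) = \sum_{k=0}^{n-1} f(\theta_k \Xi)$, where $\theta_k$ is the shift of a configuration by $k$ and $f$ counts the points of a configuration in $(0,1]$; stationarity means each $\theta_k$ preserves the law of $\Xi$. Since $m(s+t) = m(s) + m(t)$ for $m(t) := \E N((0,t])$ by stationarity, and $m$ is nondecreasing, $m$ is linear, so $\E N((0,t]) = \lambda t$ for all $t$ and in particular $\E f = \lambda < \infty$. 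Birkhoff's pointwise ergodic theorem then yields $\tfrac1n N(n) \to L := \E[f \mid \cI]$ almost surely, where $\cI$ is the $\sigma$-field of shift-invariant events, and the monotone sandwich $N(\lfloor t \rfloor) \le N(t) \le N(\lceil t \rceil)$ together with $\lfloor t \rfloor / t,\ \lceil t \rceil / t \to 1$ promotes this to $N(t)/t \to L$ almost surely along real $t \to +\infty$.

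The remaining task is to show $L = \lambda$ almost surely. Taking expectations gives $\E L = \E f = \lambda$, but this alone does not force $L$ to be constant: what is needed is that $\Xi$ is ergodic under $\{\theta_k\}_{k \in \bZ}$, and this is the content one should read into the deterministic-intensity hypothesis. For the stationary renewal processes used in the rest of the paper, ergodicity is automatic — the inter-point gaps are i.i.d., so the Bernoulli shift on the gap sequence is ergodic, the shift-invariant $\sigma$-field is trivial, and $L$ is a.s.\ constant, hence equal to $\lambda$. Alternatively, and more elementarily, one can bypass the ergodic theorem in the renewal case: writing $S_n$ for the location of the $n$-th point, $S_n = S_0 + \sum_{k=1}^n X_k$ with $X_1, X_2, \dots$ i.i.d.\ renewal gaps and $S_0$ a.s.\ finite, so $S_n/n \to \mu$ a.s.\ by the strong law of large numbers, where $\mu$ is the mean gap; the elementary renewal theorem gives $\lim_t \E N(t)/t = 1/\mu$, which is exactly the defining limit for $\lambda$, so $\mu = 1/\lambda$; then $S_{N(t)} \le t < S_{N(t)+1}$ with $N(t) \to \infty$ a.s.\ sandwiches $N(t)/t \to 1/\mu = \lambda$.

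The single delicate point is this identification of the limit. A stationary point process can have a perfectly deterministic mean intensity and still fail to be ergodic (for instance a $50$--$50$ mixture of two Poisson processes of different rates), in which case $N(t)/t$ converges to a genuinely random limit; so the proof really rests on the ergodicity available in the setting of interest, namely the independence of the gaps in a renewal process. Everything else — the reduction to a shift average, Birkhoff's theorem, the sandwiching in $t$, and the elementary renewal theorem $\E N(t)/t \to \lambda$ — is standard, which is why it is reasonable to present the argument only in outline.
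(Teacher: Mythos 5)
The paper does not actually supply a proof of this proposition --- it is declared standard and omitted --- so there is nothing to compare line by line; your outline is the standard argument and it is essentially correct. More importantly, you have put your finger on a real issue rather than a cosmetic one: as literally stated, the proposition is false for general stationary point processes, since a deterministic \emph{mean} intensity $\lim_t \E N(t)/t$ does not force the Birkhoff limit $L=\E[f\mid\cI]$ to be constant (your $50$--$50$ mixture of two Poisson rates is exactly the right counterexample). The statement is true precisely when the process is ergodic under the shift, and in the paper's actual uses this holds: the initial conditions are renewal processes with i.i.d.\ gaps (ergodic by the Bernoulli-shift argument or, more elementarily, by your SLLN-plus-sandwich $S_{N(t)}\le t<S_{N(t)+1}$ argument), and the limit process $\Xi_\lambda^{(\infty)}$ to which Proposition~\ref{prop:intensity-empirical} is applied in the proof of Theorem~\ref{thm:weak2R} is a shift-equivariant factor of that renewal process, so it inherits ergodicity. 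Your proof is therefore complete for the purposes of the paper, provided you state the ergodicity hypothesis (or the renewal/factor-of-renewal restriction) explicitly rather than leaving it to be ``read into'' the deterministic-intensity assumption; without that added hypothesis no proof of the proposition as written can exist.
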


We will also use the following result, which relates intensity to the expected gap between consecutive points.
\begin{proposition}\label{prop:gap}
    A simple stationary point process $\Xi$ has intensity $\lambda$ iff the expected gap between consecutive points is $\frac{1}{\lambda}$.
    \label{prop:intensity-gap}
\end{proposition}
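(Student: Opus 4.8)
The plan is to derive the equivalence from Proposition~\ref{prop:int} by translating between the point counts $N(t)$ and the partial sums of consecutive gaps, the only genuinely delicate point being the passage between almost-sure statements and statements about expectations. Adopt the indexing of the excerpt, so the points of $\Xi$ are $\dots < X_{-1} < X_0 < X_1 < \dots$ with $X_0$ the least positive point, and write $G_k := X_k - X_{k-1}$ for the consecutive gaps. If $\lambda = 0$ or $\lambda = \infty$ the claimed equivalence is degenerate (an intensity $0$ corresponds to an a.s.\ infinite mean gap, matching $1/0 = \infty$), so assume $0 < \lambda < \infty$; then a simple stationary point process of intensity $\lambda$ has infinitely many points on each half-line almost surely, so $X_n \to \infty$ as $n \to \infty$.

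($\Rightarrow$) Suppose the intensity $\lambda$ exists, is finite and deterministic. By Proposition~\ref{prop:int}, $N(t)/t \to \lambda$ almost surely. The points of $\Xi$ in $(0, X_n]$ are exactly $X_0, X_1, \dots, X_n$, so $N(X_n) = n+1$; substituting $t = X_n \to \infty$ gives $(n+1)/X_n \to \lambda$, hence $X_n/n \to 1/\lambda$ a.s. Writing $X_n = X_0 + \sum_{k=1}^n G_k$ and dividing by $n$ (the term $X_0/n$ vanishing), we obtain $\frac1n\sum_{k=1}^n G_k \to 1/\lambda$ almost surely. In the renewal case the $G_k$ are i.i.d.\ and nonnegative, which forces $\E[G_1] = 1/\lambda$: if $\E[G_1] = \infty$, truncation would make the Cesàro averages diverge, contradicting the finite limit; if $\E[G_1] < \infty$, the strong law identifies the limit as $\E[G_1]$. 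For a general stationary $\Xi$ the same argument run through the spatial ergodic theorem (the intensity being deterministic pins the ergodic limit) yields the conclusion for the \emph{typical} (Palm) gap, which is the sense in which ``expected gap'' must be read — see the remark below.

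($\Leftarrow$) Conversely, suppose the expected consecutive gap equals $\mu := 1/\lambda < \infty$. In the renewal case the renewal strong law gives $N(t)/t \to 1/\mu$ a.s.; to conclude that the intensity, defined as $\lim_{t\to\infty}\E N(t)/t$, also equals $1/\mu$, one must upgrade this to convergence in mean. This is exactly the (elementary) renewal theorem, provable via Wald's identity at the stopping time $N(t)+1$ — which gives $\mu\,\E[N(t)+1] = \E[X_{N(t)+1}]$, a quantity lying between $t$ and $t$ plus the expected overshoot — together with the fact that the overshoot contributes only $o(t)$, so $\E N(t)/t \to 1/\mu = \lambda$. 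For a general stationary $\Xi$ the corresponding identity ``intensity $\times$ mean Palm gap $= 1$'' is the Palm inversion (mass-transport) formula applied in reverse.

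The main obstacle is precisely this passage between pathwise and averaged convergence — turning $\frac1n\sum_{k\le n}G_k \to 1/\lambda$ a.s.\ into a statement about $\E[G_1]$, and $N(t)/t \to \lambda$ a.s.\ into $\E N(t)/t \to \lambda$ — which for i.i.d.\ gaps is handled by the converse strong law and the elementary renewal theorem respectively. A secondary subtlety is that for a non-renewal stationary process the gap straddling the origin is length-biased, so $\E[X_1 - X_0]$ under the stationary law need not equal $1/\lambda$ (a periodic $1,3,1,3,\dots$ gap pattern with random phase already violates it); ``the gap between consecutive points'' must therefore be understood in the Palm sense, and the bridge is the Palm inversion formula. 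Since the present paper applies Proposition~\ref{prop:gap} only to stationary renewal processes, where the renewal property at $X_0$ makes $X_1 - X_0$ itself a fresh gap with mean $1/\lambda$, the i.i.d.-gap version above suffices.
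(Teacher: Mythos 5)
The paper offers no proof to compare against: Proposition~\ref{prop:intensity-gap} is stated with the remark that its proof is ``standard'' and omitted, so your write-up is filling a gap rather than paralleling an argument. Your route --- deducing $\frac1n\sum_{k\le n}G_k \to 1/\lambda$ a.s.\ from Proposition~\ref{prop:int} by evaluating $N(t)/t$ along $t = X_n$, then identifying the limit with $\E[G_1]$ via the strong law (with the truncation argument ruling out $\E[G_1]=\infty$), and handling the converse by the elementary renewal theorem / Wald's identity --- is a correct and standard way to prove the i.i.d.-gap version, and the level of detail (citing the elementary renewal theorem and Palm inversion rather than reproving them) is consistent with how the paper treats the statement.

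Your most valuable observation is the caveat that, as literally stated for a general simple stationary process, the proposition is false unless ``expected gap'' is read in the Palm (typical-gap) sense --- your alternating $1,3,1,3,\dots$ example is exactly right, since there $\E[X_1-X_0]=3/2$ while $1/\lambda=2$. One correction, though: your closing claim that the paper applies Proposition~\ref{prop:intensity-gap} only to stationary renewal processes is not accurate. In Section~\ref{ssec:weak2R-proof} it is invoked for the point process of cluster locations of $\Xi_{\lambda}^{(\infty)}$, which is stationary (with i.i.d.\ groupings) but not renewal in general, so the renewal-only version you certify does not quite cover the paper's actual use; the Palm/ergodic-average formulation you sketch (intensity $\times$ mean Palm gap $=1$, or the ratio-of-ergodic-averages reading implicit in the paper's SLLN computation) is the one genuinely needed there. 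This is a scope issue in your final remark rather than an error in the proof of the proposition itself, but it is worth fixing, since it is precisely the non-renewal case where the naive reading of ``expected gap'' breaks down.
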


It will sometimes be helpful to think of point processes in the context of \emph{geometric random graphs}.
A geometric random graph can be constructed from any (finite or countable) set $S$ of points in $\bR$ by adding an edge between points $x,y\in S$ whenever $|y-x| \leq r$ for some fixed $r$.

\begin{definition}
    Let $\Xi$ be a stationary point process, denote its $x$-indexed point by $\Xi_x$. For $-\infty \leq x \leq y \leq \infty$, the interval $[x, y] \subseteq \bZ$ is a \emph{connected component} for $\Xi$ if all of the following hold:
    \begin{enumerate}[label={\arabic*.}]
        \item If $-\infty \leq x < z \leq y \leq \infty$, $\Xi_z - \Xi_{z-1} \leq 1$,
        \item If $x > -\infty$, $\Xi_{x} - \Xi_{x-1} > 1$,
        \item If $y < \infty$, $\Xi_{y+1} - \Xi_{y} > 1$.
    \end{enumerate}
\end{definition}

\begin{definition}
    The \emph{induced graph} w.r.t.~$\Xi$ for a connected component $[x, y] \subseteq \bZ$ is the graph with vertex set $\{\Xi_{v} : v \in [x, y]\}$ and with an undirected edge $(\Xi_{x}, \Xi_{y})$ whenever $|\Xi_{y}  - \Xi_{x}| \leq 1$.
\end{definition}

Throughout, we will refer to connected components and their induced graphs interchangeably.
We will also refer to points of $\Xi$ and vertices of the induced graph interchangeably.

\subsection{Notations}
In this work, we often look at the evolution of a point process under certain dynamics. When $\Xi_{\lambda}^{(0)}$ is a simple renewal process of intensity $\lambda$, we will denote by $\Xi_{\lambda}^{(t)}$ the resultant point process under the dynamics at time $t$. We abuse notation and write $\Xi_{\lambda}^{(t)}(\omega)$ when it helps to explicitly treat a point process as a random variable.
Where it is clear, we will sometimes suppress the dependence on $\omega$. If we also need to refer to the $i$-indexed point, we will write $\Xi_{\lambda, i}^{(t)}$ or $\Xi_{\lambda,i}^{(t)}(\omega)$.
When the intensity $\lambda$ is not needed, we will use $\Xi_i$ to denote the $i$-indexed point of $\Xi$.
We will refer to the $i$-indexed point as the $i$-th point or point $i$. Finally, for a set $S$, we denote by $\#S$ the number of elements of $S$. For $R \geq 0$, we denote by $B_R(x)$ the closed $R$--ball centered at $x$, i.e., $B_R(x)=\{y\in \mathbb{R}: |x-y|\leq R\}$.

\section{Model, $2R$-Conjecture, and Main Results}
\label{sec:model}
\subsection{Model}
We consider the following model, which is an extension of the model introduced by Hegselmann and Krause in 2002~\cite{hegselmann2002opinion}.
Let $\Xi_{\lambda}^{(0)}$ be a simple renewal process on $\bR$ with intensity $\lambda$.
Throughout, we assume that the distribution of the gap between successive points of $\Xi_{\lambda}^{(0)}$ has a density.
The dynamics are in discrete time as follows.
At each time $t \geq 1$, the location of the $i$-th point $\Xi_{\lambda,i}^{(t)}$ is the arithmetic average of all points within its $R$--ball at time $t-1$, \textit{i.e.},
$$\Xi_{\lambda,i}^{(t)} 
:= \frac{\sum_{j \in \bZ}\Xi_{\lambda,j}^{(t-1)} \cdot \ind_{ \Xi_{\lambda,j}^{(t-1)} \in B_R(\Xi_{\lambda,i}^{(t-1)})}}{\#\big\{j \in \bZ: \Xi_{\lambda,j}^{(t-1)} \in B_R(\Xi_{\lambda,i}^{(t-1)})\big\}},$$
where $\ind_{(\cdot)}$ is the indicator function. \emph{Without loss of generality, we will take $R = 1$ in this paper}.

We note that the original model by Hegselmann and Krause considers only a finite number of initial points restricted to a compact interval of $\bR$; this is the setting where many of the analyses have been performed.
It is well--known that in this setting, the dynamics reach a fixed point almost surely, within finite time (see the survey article~\cite{bernardo2024bounded} and the references therein).

We take as our initial condition a point process of positive intensity; otherwise, our dynamics are identical to those of the \hk model.
The techniques we use in this paper do not have analogues in the restricted setting of a compact interval.

\subsection{$2R$-Conjecture}
Since it has been established that each point stops moving a.s.~after a finite time, the next natural question is determining the locations at which points remain fixed.
The fixed point consists of \emph{clusters} of co-located points.
It is easy to see that the fixed points of the \hk dynamics are not unique: for any $\varepsilon > 0$, the simple point process $(1 + \varepsilon)\cdot \bZ$ is a fixed point.
Similarly, any renewal process whose gap distribution $G$ satisfies $\pr(G \leq 1) = 0$ is also a fixed point.

The $2R$-Conjecture, introduced by Blondel, Hendrickx, and Tsitsiklis~\cite{blondel20072r}, states that for large $\lambda$, the gaps between clusters of $\Xi_{\lambda}^{(\infty)}$ are \emph{approximately} equal to $2$; although the precise sense of the word ``approximately'' is not specified.
As of recent works~\cite{bhattacharyya2013convergence, bernardo2024bounded, wedin2022mathematics}, this conjecture has been neither stated precisely nor resolved in the literature.

\subsection{Main Results}
In this paper, we show the following.
We note that even the precise statements of the results are new and have not appeared before in the literature despite significant general interest in the \hk model and the $2R$-conjecture specifically~\cite{bernardo2024bounded}.
Throughout this paper, we argue on $(\Omega, \cF, \pr)$, which is a probability space including point processes on $\bR$ of all intensities $\lambda > 0$.

\begin{theorem}
    The weak limit $\Xi_{\lambda}^{(\infty)} := \lim_{t \to \infty}\Xi_{\lambda}^{(t)}$ exists $\pr$-a.s..
    \label{thm:limit-existence}
\end{theorem}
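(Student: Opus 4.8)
The plan is to reduce the infinite dynamics to a disjoint, mutually non-interacting family of finite pieces, each of which is an instance of the classical \hk dynamics on a bounded interval with finitely many points, and then to invoke the known finite-time convergence in that setting. \textbf{Step 1 (connected components only split).} I would first prove that if $\Xi_{\lambda,i+1}^{(t)}-\Xi_{\lambda,i}^{(t)}>1$ then the same inequality holds at time $t+1$, hence at all later times. The dynamics preserve the order of the points, so every point of index $>i$ lies to the right of $\Xi_{\lambda,i}^{(t)}+1$ and thus outside $B_1(\Xi_{\lambda,i}^{(t)})$; therefore $\Xi_{\lambda,i}^{(t+1)}$ is an average of points that are all $\le\Xi_{\lambda,i}^{(t)}$, so $\Xi_{\lambda,i}^{(t+1)}\le\Xi_{\lambda,i}^{(t)}$, and symmetrically $\Xi_{\lambda,i+1}^{(t+1)}\ge\Xi_{\lambda,i+1}^{(t)}$; subtracting gives the claim. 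Consequently the partition of $\bZ$ into connected components of the radius-$1$ geometric graph of $\Xi_\lambda^{(t)}$ only refines in $t$: a point in a finite component at some time lies in a finite component at all later times, and points of distinct components never again interact. Moreover, within a finite component $[a,b]\subseteq\bZ$ the extreme points move inward and all others stay between them, so the spatial range of the component is non-increasing in $t$.

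\textbf{Step 2 (reduction and conclusion).} Once a point $i$ belongs to a finite component $[a,b]$ at a time $t_0$, the coordinates $(\Xi_{\lambda,a}^{(t)},\dots,\Xi_{\lambda,b}^{(t)})_{t\ge t_0}$ evolve exactly as the classical \hk dynamics on the finitely many points $\Xi_{\lambda,a}^{(t_0)}\le\cdots\le\Xi_{\lambda,b}^{(t_0)}$ inside the bounded interval $[\Xi_{\lambda,a}^{(t_0)},\Xi_{\lambda,b}^{(t_0)}]$, the remaining points being irrelevant by Step 1; by the cited finite-time convergence these coordinates are eventually constant, a.s. So it suffices to show that, a.s., every point eventually lies in a finite component: granting this, each $\Xi_{\lambda,i}^{(t)}$ is eventually constant, and since each component's points stay inside its fixed bounded range only finitely many components ever place mass in a given bounded window, whence $\sum_i g(\Xi_{\lambda,i}^{(t)})$ stabilizes for every compactly supported continuous $g$ --- precisely a.s.\ convergence of $\Xi_\lambda^{(t)}$ in the vague topology (the sense of ``weak limit'' here). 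When $\pr(G>1)>0$ (e.g.\ the Poisson process of any intensity), the i.i.d.-gap structure of $\Xi_\lambda^{(0)}$ puts point $0$ into a finite component already at $t=0$ a.s.\ --- the nearest gaps exceeding $1$ on either side being a.s.\ finitely far --- and by stationarity this holds for every $i$, finishing the proof.

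\textbf{Main obstacle.} The remaining case, $\pr(G>1)=0$ (which forces $\lambda\ge1$ since $\E G=1/\lambda$), is where $\Xi_\lambda^{(0)}$ is a single infinite component, and this is the genuinely hard part: one must show that the dynamics open a gap larger than $1$ around each point within finite time a.s., or else establish convergence of the infinite dynamics directly. For a finite system this is automatic from the finite-time convergence theorem, but local averaging tends to \emph{homogenize} spacings rather than enlarge them, so the argument cannot be purely local; it must use stationarity/ergodicity of the initial process --- for instance a per-site energy-dissipation estimate of the form $\sum_{t\ge0}\E[\,|\Xi_{\lambda,0}^{(t+1)}-\Xi_{\lambda,0}^{(t)}|\wedge1\,]<\infty$, which would give Cauchyness of $(\Xi_\lambda^{(t)})$ in a suitable metric --- and this is exactly the type of step with no finite-dimensional analogue.
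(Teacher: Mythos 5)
Your proposal is correct and takes essentially the same route as the paper: the initial process a.s.\ decomposes into finite connected components (geometric component sizes, since each gap independently exceeds $1$ with positive probability), gaps larger than $1$ persist so components never interact, and the classical finite-time termination of the \hk dynamics within each finite component yields the a.s.\ (vague) limit — your Step 1 merely makes explicit the gap-persistence and non-interaction facts that the paper cites as well known. The ``main obstacle'' you flag, gap distributions with $\pr(G>1)=0$, is not handled by the paper either: its assertion that component sizes are $\mathrm{Geom}(1-p_{\lambda}^{1})$ and hence a.s.\ finite tacitly assumes $p_{\lambda}^{1}<1$, so your argument covers exactly the ground the published proof does.
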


\begin{theorem}[Weak $2R$-Conjecture]
    For all $\lambda > 0$, the point process $\Xi_{\lambda}^{(\infty)}$ is stationary and has intensity $\frac{\lambda}{2\lambda + 1}$ $\pr$-a.s..
    In particular, as $\lambda \to \infty$, the expectation of the gap between clusters of $\Xi_{\lambda}^{(\infty)}$ converges to $2$.
    \label{thm:weak2R}
\end{theorem}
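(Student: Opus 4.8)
My plan is to prove the two assertions — stationarity, and the intensity formula — separately; the statement about the gap is then immediate from Proposition~\ref{prop:gap}.

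\emph{Stationarity and reflection symmetry.} The one-step update, viewed as a map on locally finite counting measures on $\bR$, commutes with translations: the arithmetic mean is translation-equivariant, and whether $\Xi_j\in B_R(\Xi_i)$ depends only on $\Xi_j-\Xi_i$. Hence each $\Xi_\lambda^{(t)}$ is stationary, and since a vague limit of stationary point processes is stationary, Theorem~\ref{thm:limit-existence} gives that $\Xi_\lambda^{(\infty)}$ is stationary. The same reasoning with $x\mapsto -x$, together with the reflection-invariance of the stationary renewal process, shows $\Xi_\lambda^{(\infty)}$ is invariant under reflection — a symmetry I will want below.

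\emph{Reduction of the intensity.} The dynamics neither creates nor destroys points, so, counted with multiplicity, $\Xi_\lambda^{(\infty)}$ still has intensity $\lambda$. Let $n_c$ be the intensity of the simple process of distinct cluster locations and $M$ the number of points of $\Xi_\lambda^{(\infty)}$ at the origin under the Palm measure of that process; since each point lies in exactly one cluster, the standard mass-transport identity gives $n_c\,\E[M]=\lambda$. So the theorem is equivalent to the single dynamical identity
\[
  \E[M]\;=\;2\lambda+1,
\]
i.e.\ the typical cluster has, in expectation, $2\lambda+1=2R\lambda+1$ points; then Proposition~\ref{prop:gap} converts $n_c=\tfrac{\lambda}{2\lambda+1}$ into ``expected inter-cluster gap $=2+\tfrac1\lambda\to2$''. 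Two structural hints constrain the argument: under the scaling $x\mapsto cx$ (which sends $(R,\lambda)\mapsto(cR,\lambda/c)$) the answer must be a scale-invariant combination, and $2R\lambda+1$ — ``the expected mass in an interval of length $2R$, plus the cluster itself'' — is exactly of this shape; and the answer depends only on $\lambda$, not on the full gap law, so the computation cannot be distribution-specific but must rest on a conservation/symmetry mechanism.

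\emph{The core and the main obstacle.} Because the dynamics preserves left-to-right order, each cluster is the eventual merger of a contiguous block of initial points; what must be shown is that this block has expected initial span $2R$ (the extra $\tfrac1\lambda$ in the gap being the single initial gap straddling the cluster boundary). Heuristically a forming cluster ``reaches out'' distance $R$ on each side — the last point it absorbs from a given side sits within distance $R$ of the cluster at the instant just before it freezes — and this becomes tight as $\lambda\to\infty$. Making this precise is the crux: one has to track the freezing mechanism carefully, use reflection symmetry to balance the two sides of a cluster, and use stationarity to argue that the (random, generically nonzero) displacement of a cluster's final location relative to its initial block washes out in expectation. I expect establishing $\E[M]=2\lambda+1$ — uniformly over renewal laws — to be the main difficulty, and it is plausible that some mild regularity on the gap distribution is needed for the ``last absorbed point'' analysis to close. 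Granting it, the two displayed relations give $n_c=\tfrac{\lambda}{2\lambda+1}$, and Proposition~\ref{prop:gap} finishes the proof.
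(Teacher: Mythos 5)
Your stationarity argument and the Palm/Campbell reduction $n_c\,\E[M]=\lambda$ are sound, and you have correctly isolated the identity on which the theorem actually rests: the typical cluster of $\Xi_{\lambda}^{(\infty)}$ has expected size exactly $2\lambda+1$. But at that point you stop — you state that you expect establishing $\E[M]=2\lambda+1$ to be the main difficulty and proceed ``granting it.'' That identity is the entire quantitative content of the theorem, so this is a genuine gap, not a complete proof. Moreover, the route you sketch for closing it (tracking the freezing mechanism, the ``last absorbed point,'' an estimate that ``becomes tight as $\lambda\to\infty$'') is asymptotic in character: at best it could yield the $\lambda\to\infty$ statement, whereas the theorem asserts the exact, gap-law--free value $\frac{\lambda}{2\lambda+1}$ for every fixed $\lambda$ — a point your own scale-invariance remark already signals requires a conservation or symmetry mechanism rather than a boundary-layer analysis.

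The paper supplies precisely that mechanism. Because the root point is uniformly placed within its connected component (Proposition~\ref{prop:0pt-uniform}), a mass-transport identity (Corollary~\ref{prop:MTP}) shows that the expected size of the component containing the root is unchanged by one step of the dynamics; hence $\E[\#\cC_{\lambda}^{(t)}]$ is constant in $t$ and is identified with its time-zero value $2\lambda+1$ coming from the renewal initial condition (Corollary~\ref{cor:int}). A strong-law argument over the i.i.d.\ ``groupings'' of initial connected components then converts this expected cluster size into the intensity $\frac{\lambda}{2\lambda+1}$ of the cluster process, and Proposition~\ref{prop:gap} gives the gap statement, exactly as in your last step. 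So your reduction matches the paper's skeleton, but the key lemma — the time-invariance of the expected component/cluster size furnished by the mass transport principle — is exactly the piece you left unproved, and without it (or some substitute conservation law valid for all $\lambda$ and all admissible renewal laws) the argument does not close.
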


The following definition gives us a condition on the initial conditions $(\Xi_\lambda^{(0)})_{\lambda \in \bR}$ under which we can state the Strong $2R$-conjecture.
\begin{definition}
    The sequence $(\Xi_\lambda^{(0)})_{\lambda \in \bR}$ of initial conditions \emph{satisfies the functional strong law of large numbers} if the following holds.
    Let $F_\lambda$ be a non-decreasing right-continuous function with left limits such that $F_\lambda(0) = 0$, and for $y > x$, $F_\lambda(y) - F_\lambda(x)$ is the number of points of $\Xi_\lambda^{(0)}$ in $(x, y]$.
    Then as $\lambda \to \infty$, $$\frac{1}{\lambda}F_{\lambda} \xrightarrow{u.o.c.} \mathrm{Id}_\bR,$$ 
    where $\mathrm{Id}_\bR$ is the identity function on $\bR$ and \textit{u.o.c.} denotes uniform convergence on compact sets of $\bR$.
\end{definition}

\begin{conjecture}[Strong $2R$-Conjecture]
    Let $(\Xi_{\lambda}^{(0)})_{\lambda \in \bR}$ be a sequence of simple stationary renewal processes that satisfy the functional strong law of large numbers. Then the following hold:
    \begin{enumerate}
        \item The joint locations of the clusters of $\lim_{\lambda \to \infty}\lim_{t \to \infty} \Xi_{\lambda}^{(t)}$ are distributed as $\theta + 2\bZ$, where $\theta \sim \mathrm{Unif}(-1, 1)$.
        \item If $X_{\lambda}(i)$ is the number of points in the $i$-th cluster of $\lim_{t \to \infty}\Xi_{\lambda}^{(t)}$, then as $\lambda \to \infty$, the joint convergence holds: $(\frac{1}{\lambda}X_{\lambda}(i))_{i \in \bZ} \xrightarrow{a.s.} (\ldots, 2, 2, 2, \ldots)$.
    \end{enumerate}
\end{conjecture}

We provide evidence for our statement of the strong $2R$-Conjecture in Section~\ref{sec:strong-2R-evidence}.
The assumption that the initial conditions satisfy the functional strong law of large numbers is mild and is satisfied, \textit{e.g.}, when $\Xi_{\lambda}^{(0)}$ is a Poisson point process of intensity $\lambda$.

\section{Proofs of the Theorems}
\label{sec:weak-2R-proof}

\subsection{Proof of Theorem~\ref{thm:limit-existence}}
Let $p_{\lambda}^1$ be the probability that successive points of $\Xi_{\lambda}^{(0)}$ are at a distance of at most $1$.
Notice that at time $0$, the size of a connected component is distributed as $\mathrm{Geom}(1 - p_{\lambda}^{1})$, and thus the connected components are a.s. finite at time $0$.
It is well-known that the \hk dynamics within each connected component terminate in finite time, from which the result follows.


\begin{remark}\label{rem:interval-atypical}
    The connected component argument used to show the existence of the weak limit $\Xi_{\lambda}^{(\infty)}$ serves as an important motivator for the extension of the \hk models to point processes.
    Even the imprecise notions of the $2R$ conjectures are in the setting where the number of points in a compact interval is sent to infinity: this is analogous to sending the intensity of a point process to infinity.
    Using, for exposition, the assumption that $\Xi_{\lambda}^{(0)}$ is a Poisson point process of intensity $\lambda$, the size of the connected components of the initial point process are distributed as $\mathrm{Geom}(e^{-\lambda})$.
    While for any $K < \infty$, there are infinitely many connected components that are contained in a compact interval of length $K$, these are vanishingly rare as $\lambda \to \infty$.
    
\end{remark}

\subsection{Basic Properties}
Denote by $\cC_{\lambda}^{(t)}$ the connected component containing the $0$-th vertex at time $t$, when the intensity of the initial condition is $\lambda$. Also, denote the number of points in the connected component $\cC_{\lambda}^{(t)}$ by $\#\cC_{\lambda}^{(t)}$. We will treat $\cC_{\lambda}^{(t)}(\omega)$ as a random geometric graph rooted at the $0$-th vertex.
The following are easily seen to hold:

\begin{proposition}
    For all $\lambda > 0$, $\#\cC_{\lambda}^{(0)} \sim \mathrm{Geom}_1(p)$, where $p := \pr(\Xi_{\lambda,1}^{(0)} - \Xi_{\lambda,0}^{(0)} > 1)$ is the probability that the gap between consecutive points in $\Xi_{\lambda}^{(0)}$ exceeds $1$.
    Furthermore, the induced geometric random graphs for connected components of $\Xi_{\lambda}^{(0)}$ are i.i.d. $\pr$-a.s. finite geometric random graphs distributed according to an $\cF$-measurable probability measure $\pr^{(0)}$ on finite geometric random graphs on $\bR$.
    \label{prop:iid-comps-0}
\end{proposition}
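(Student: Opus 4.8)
The plan is to push everything down to the i.i.d.\ structure of the gaps of $\Xi_\lambda^{(0)}$. Write $G_i := \Xi_{\lambda,i}^{(0)} - \Xi_{\lambda,i-1}^{(0)}$ for the $i$-th gap; since $\Xi_\lambda^{(0)}$ is renewal, the $G_i$ are i.i.d., and we set $\chi_i := \ind_{\{G_i > 1\}}$, so that $(\chi_i)_{i\in\bZ}$ is an i.i.d.\ Bernoulli($p$) sequence. By the definition of a connected component, vertices $i-1$ and $i$ lie in the same component precisely when $\chi_i = 0$; hence the components of $\Xi_\lambda^{(0)}$ are exactly the maximal blocks of consecutive vertices across which $\chi$ vanishes, with the ``big'' gaps $\{i : \chi_i = 1\}$ acting as separators. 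Since $p > 0$, a Borel--Cantelli argument shows there are infinitely many big gaps on each side of the origin $\pr$-a.s., so there are countably many components, each a.s.\ finite; the induced graph on a component is a measurable function of the vector of its within-component gaps (which records all pairwise distances), so the component-extraction maps are $\cF$-measurable.

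For the distributional claim, note that a component with exactly $k$ within-component (small) gaps has $k+1$ points, and the number of $0$'s in a maximal run of an i.i.d.\ Bernoulli($p$) sequence is geometric; translating through the correspondence above gives $\#\cC_\lambda^{(0)} \sim \mathrm{Geom}_1(p)$ on $\{1,2,\dots\}$. Concretely, after the last big gap at or before index $0$ the gaps $G_1, G_2, \dots$ are i.i.d.\ copies of the gap law, so the extent of $\cC_\lambda^{(0)}$ to the right of the origin is $\mathrm{Geom}_0(p)$ (geometric on $\{0,1,\dots\}$), and conditionally on that the extent to the left is independent with the matching law; adding a $+1$ for vertex $0$ itself gives the geometric total.

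For the i.i.d.\ statement, record each component after translating it so that its left-most point sits at $0$; this position vector determines its induced geometric random graph. Enumerate the components by $\bZ$ so that component $n+1$ lies immediately to the right of component $n$, and run a regeneration argument: conditioning on the positions of the big gaps, the within-run gap vectors are conditionally independent across runs, each being an i.i.d.\ vector drawn from the conditional law $\cL(G \mid G \le 1)$ of the appropriate length. Because a big gap at index $j$ depends only on $G_j$, the sequence $G_{j+1}, G_{j+2},\dots$ is i.i.d.\ from the original law and independent of everything to the left, i.e.\ the big gaps are genuine regeneration points for the forward gap sequence (and, by the reversal symmetry of an i.i.d.\ sequence, the backward one); since the run lengths themselves form an i.i.d.\ geometric family, unconditioning yields that the sequence of component shapes, hence of induced geometric random graphs, is i.i.d. Its common law is then a well-defined probability measure $\pr^{(0)}$ on finite geometric random graphs on $\bR$, obtained as the pushforward of $\pr$ under the (measurable) component-extraction map.

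The step requiring the most care is the bookkeeping around the component straddling the origin. Because we root at the point of least positive location, that particular component—and the inter--big-gap block containing index $0$—is selected in a size-biased way relative to a ``typical'' component, so reconciling the rooted statements with the across-components i.i.d.\ statement is exactly where the renewal/regeneration property (equivalently, passing to the Palm version of the process) must be invoked. I expect this to be the only non-routine point; the remaining ingredients—the reduction to runs of an i.i.d.\ Bernoulli sequence, the geometric run-length computation, conditioning on the big-gap locations, and the measurability claims—are all standard.
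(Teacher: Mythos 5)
The central distributional step does not go through. You decompose $\#\cC_{\lambda}^{(0)} = L + R + 1$, where $R$ is the number of consecutive small gaps to the right of vertex $0$ and $L$ the number to its left, and you correctly identify $L$ and $R$ as independent $\mathrm{Geom}_0(p)$ variables; but the sum of two independent $\mathrm{Geom}_0(p)$ variables plus one is \emph{not} $\mathrm{Geom}_1(p)$ --- it is the size-biased (negative binomial) law $\pr(L+R+1=n) = n\,p^{2}(1-p)^{n-1}$. So your own computation contradicts the sentence ``adding a $+1$ for vertex $0$ itself gives the geometric total.'' This is precisely the inspection-paradox issue you flag in your closing paragraph: the component containing the distinguished vertex $0$ is biased toward larger components, and no amount of regeneration or Palm bookkeeping will convert the rooted size law into an unbiased geometric --- under the stationary renewal initial condition one gets a length-weighted tilt, and under the literal ``all gaps i.i.d.'' reading one gets the count-size-biased law above; in neither case is it $\mathrm{Geom}_1(p)$. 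What your run-of-zeros argument does establish is that a \emph{typical} component, i.e.\ the common law $\pr^{(0)}$ of the i.i.d.\ components in the second sentence, has $\mathrm{Geom}_1(p)$ many points. As written, though, your proposal simultaneously asserts that the rooted component is geometric and that it is size-biased relative to a typical component; these cannot both hold, and the ``only non-routine point'' you defer is in fact the entire content of the first claim.

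The remaining ingredients are essentially sound and standard: the reduction to the i.i.d.\ Bernoulli$(p)$ indicators $\ind_{\{G_i>1\}}$, the regeneration-at-big-gaps argument for independence of component shapes away from the origin, the measurability of the component-extraction map, and the definition of $\pr^{(0)}$ as a pushforward (you do assume $p>0$ without comment, but the paper makes the same implicit assumption). Note also that with a genuinely stationary renewal initial condition the gap straddling the origin is itself length-biased, so the ``all gaps i.i.d.'' reduction needs the same Palm-type care for indices near $0$; this is harmless for the i.i.d.\ claim about components far from the origin but feeds into the same rooted-versus-typical discrepancy. For comparison, the paper states this proposition with no proof at all (it is listed among facts ``easily seen to hold''), so the real test is whether your argument closes the gap; at the key step it does not. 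To finish, you must either prove the size-biased law for $\#\cC_{\lambda}^{(0)}$ as the component of the rooted vertex (and then reconcile that with how the proposition is invoked later, e.g.\ in Corollary~\ref{cor:int}), or prove the $\mathrm{Geom}_1(p)$ statement for the typical component under $\pr^{(0)}$ and say explicitly that this is the sense in which the first sentence should be read.
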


\begin{proposition}
    The induced graph $\Xi_{\lambda}^{(t)}$ consists of i.i.d. $\pr$-a.s. finite random graphs (although each component of $\Xi_{\lambda}^{(t)}$ is not necessary independent of the next). 
    In particular, the $\cF$-measurable law $\pr^{(t)}$ of $\cC_{\lambda}^{(t)}(\omega)$ is well-defined on the space of finite geometric random graphs on $\bR$.
    \label{prop:law_C_lambda_t}
\end{proposition}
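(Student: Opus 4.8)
The plan is to reduce the structure of $\Xi_\lambda^{(t)}$ at an arbitrary time $t$ to the already-understood structure at time $0$ (Proposition~\ref{prop:iid-comps-0}), exploiting the fact that the \hk update never causes distinct connected components to merge. Write $\Phi$ for the one-step \hk update on configurations. The key structural claim I would establish is: \emph{if $C$ is a connected component of $\Xi_\lambda^{(t-1)}$, then $\Phi$ maps every point of $C$ into the interval $[\min C,\max C]$, so distinct components' images stay separated by gaps $>1$ and each image is a finite union of connected components of $\Xi_\lambda^{(t)}$.} Granting this and iterating, the $t$-step evolution factors through the time-$0$ components: if $\cK_j$ denotes the $j$-th connected component of $\Xi_\lambda^{(0)}$, then $\Xi_\lambda^{(t)}=\bigsqcup_j \Psi_t(\cK_j)$, where $\Psi_t$ is the $t$-fold iterate of the update (acting on $\cK_j$ alone, by locality) and the blocks $\Psi_t(\cK_j)$ are pairwise separated by gaps $>1$.

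To prove the structural claim I would argue as follows. Fix a point $x$ in a component $C$ of $\Xi_\lambda^{(t-1)}$. Since $R=1$, every point of $\Xi_\lambda^{(t-1)}$ in $B_1(x)$ is within distance $1$ of $x$, hence lies in the same component $C$ by definition; therefore the average defining the new position of $x$ is an average of points of $C$ and lies in $[\min C,\max C]$. Thus $\Phi(C)\subseteq[\min C,\max C]$. If $C'$ is the component immediately to the right of $C$, the gap between the rightmost point of $C$ and the leftmost point of $C'$ exceeds $1$ by definition of connected component; since $\Phi(C)\subseteq(-\infty,\max C]$ and $\Phi(C')\subseteq[\min C',\infty)$, no edge is created between them, and similarly for any two distinct components. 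Applying the same argument at each later time (the image of a component is a union of components, each within the convex hull of its predecessor) yields $\Psi_t(\cK_j)\subseteq[\min\cK_j,\max\cK_j]$ together with pairwise separation of the blocks.

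The proposition then follows quickly. \emph{Finiteness:} every connected component of $\Xi_\lambda^{(t)}$ sits inside some block $\Psi_t(\cK_j)$, which contains at most $\#\cK_j<\infty$ points $\pr$-a.s.\ by Proposition~\ref{prop:iid-comps-0}. \emph{Distributional structure:} the induced graphs of the $\cK_j$ are i.i.d.\ with law $\pr^{(0)}$ by Proposition~\ref{prop:iid-comps-0}, and $\Psi_t$ is a fixed measurable map on finite geometric graphs, so the blocks $\Psi_t(\cK_j)$ form an i.i.d.\ sequence of finite geometric random graphs — this is the asserted ``i.i.d.''\ structure, read at the level of these lineage blocks; a single block may split into several components of $\Xi_\lambda^{(t)}$, which is precisely why consecutive components of $\Xi_\lambda^{(t)}$ need not be independent. \emph{Well-definedness of $\pr^{(t)}$:} $\cC_\lambda^{(t)}$ is the connected component rooted at the $0$-indexed point of $\Xi_\lambda^{(t)}$, and $\omega\mapsto\Xi_\lambda^{(t)}(\omega)$ is a composition of the measurable update $\Phi$ with the measurable operation of extracting the rooted component, so $\cC_\lambda^{(t)}$ is an $\cF$-measurable random variable valued in finite geometric graphs and $\pr^{(t)}$ is simply its law.

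I expect the difficulties to be bookkeeping rather than conceptual. First, points may become co-located under the dynamics, so ``connected component'' and ``number of points'' must be interpreted with multiplicity throughout; one should check — and it is true — that the update formula and the notion of connected component are insensitive to this. Second, the origin-point is re-indexed at each step (it is the point of least positive location), so the lineage block containing the time-$t$ origin-point need not be $\Psi_t$ of the time-$0$ origin-component; this is harmless for the statement as written, but it is the reason the stronger assertion ``all components of $\Xi_\lambda^{(t)}$ are identically distributed'' would need the shift-equivariance of $\Phi$ (hence stationarity of $\Xi_\lambda^{(t)}$) as a separate ingredient. Third, the separation step uses the strict inequality in the definition of connected component at $R=1$ exactly, which is where the normalization $R=1$ is genuinely used.
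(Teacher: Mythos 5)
Your argument is correct and matches what the paper has in mind: the paper states this proposition without proof (``easily seen to hold''), and the intended justification is exactly your locality/non-merging observation --- each component's update is an average of its own points, so components stay in their convex hulls and the separating gaps never shrink below $1$ --- combined with the i.i.d.\ component structure of the renewal process at time $0$ from Proposition~\ref{prop:iid-comps-0}. Your reading of the i.i.d.\ claim at the level of the evolved time-$0$ blocks (which may each split into several dependent components of $\Xi_{\lambda}^{(t)}$) is the correct interpretation of the parenthetical in the statement, and your bookkeeping caveats (translation equivariance of the update, co-location, re-indexing of the origin) are sensible details the paper glosses over.
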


Note that the measures $(\pr^{(t)})_{t \geq 0}$ are not identical over $t$ but they can all be defined over $\cF$.

\begin{proposition}
    For all $\lambda > 0$, and for all $t \geq 0$, $\#\cC_{\lambda}^{(t)}(\omega) \geq \#\cC_{\lambda}^{(t+1)}(\omega)$, $\pr(\omega)$ a.s..
    \label{prop:comp-size-monotonicity}
\end{proposition}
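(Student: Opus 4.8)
The plan is to show that a single Hegselmann--Krause step can only \emph{split} connected components and never merge them, so that the index set of the component of the $0$-th point at time $t+1$ is contained in that at time $t$; the inequality $\#\cC_{\lambda}^{(t+1)} \le \#\cC_{\lambda}^{(t)}$ is then immediate. Two ingredients are needed. The first is the standard monotonicity of the averaging map: if $\Xi_{\lambda,i}^{(t)} \le \Xi_{\lambda,i+1}^{(t)}$ then $\Xi_{\lambda,i}^{(t+1)} \le \Xi_{\lambda,i+1}^{(t+1)}$. (If the two points coincide at time $t$ they share an $R$-ball and hence coincide at time $t+1$; otherwise one uses that the index-neighbourhoods $N_i^{(t)}$ and $N_{i+1}^{(t)}$ are intervals of indices with both endpoints weakly to the right for $i+1$, and that sliding a window to the right over a sorted sequence does not decrease its average.) In particular the labelling of consecutive points is consistent across all times. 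The second, and crucial, ingredient is that index-gaps of width $>1$ persist: if $\Xi_{\lambda,i+1}^{(t)} - \Xi_{\lambda,i}^{(t)} > 1$, then $\Xi_{\lambda,i+1}^{(t+1)} - \Xi_{\lambda,i}^{(t+1)} > 1$. Indeed, by the order preservation just noted, every point in $B_1(\Xi_{\lambda,i}^{(t)})$ has index $\le i$ and hence location $\le \Xi_{\lambda,i}^{(t)}$, so $\Xi_{\lambda,i}^{(t+1)} \le \Xi_{\lambda,i}^{(t)}$; symmetrically every point in $B_1(\Xi_{\lambda,i+1}^{(t)})$ has index $\ge i+1$ and location $\ge \Xi_{\lambda,i+1}^{(t)}$, so $\Xi_{\lambda,i+1}^{(t+1)} \ge \Xi_{\lambda,i+1}^{(t)}$; subtracting yields $\Xi_{\lambda,i+1}^{(t+1)} - \Xi_{\lambda,i}^{(t+1)} \ge \Xi_{\lambda,i+1}^{(t)} - \Xi_{\lambda,i}^{(t)} > 1$.

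Granting these, let $\cP^{(t)}$ denote the partition of $\bZ$ into the index sets of the connected components of $\Xi_{\lambda}^{(t)}$. By the second ingredient, every index boundary present at time $t$ (a consecutive pair with gap $>1$) is still present at time $t+1$, so $\cP^{(t+1)}$ refines $\cP^{(t)}$; equivalently, indices $j,j'$ lying in a common block of $\cP^{(t+1)}$ already lay in a common block of $\cP^{(t)}$. Hence the block of $\cP^{(t+1)}$ containing the index of the $0$-th point is a subset of the block of $\cP^{(t)}$ containing it. Since, by order preservation, the $0$-th point is a single well-defined point throughout the dynamics (any relabelling caused by its position crossing $0$ is just an index shift applied to that tracked point), the vertex sets of $\cC_{\lambda}^{(t+1)}(\omega)$ and $\cC_{\lambda}^{(t)}(\omega)$ are nested, which gives $\#\cC_{\lambda}^{(t+1)}(\omega) \le \#\cC_{\lambda}^{(t)}(\omega)$. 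This holds on the full-probability event on which $\Xi_{\lambda}^{(0)}$ is locally finite with a well-defined consecutive-point structure, which accounts for the ``$\pr$-a.s.''; for a renewal process with a density on its gaps this event has probability one.

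The argument is essentially elementary, and I do not expect a genuine obstacle. The only points that need care in the write-up are the order-preservation fact (monotonicity of the Hegselmann--Krause averaging map), the treatment of co-located points, and the relabelling bookkeeping near the origin; but all three are routine, and the conceptual heart is simply the non-merging of components established by the second ingredient, which is what forces the component sizes to be monotone in $t$.
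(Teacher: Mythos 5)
The paper offers no proof of this proposition---it is listed among properties that are ``easily seen to hold''---so there is nothing to compare against; your argument is the standard one and is correct. The two ingredients you isolate (order preservation of the averaging map, and persistence of inter-point gaps exceeding $1$, hence non-merging of components) are exactly what make the claim ``easy,'' and your verification of each is sound: the index-neighbourhoods are intervals whose endpoints are monotone in $i$, so the sliding-window argument gives order preservation, and the one-sided containment of the $1$-balls across a gap larger than $1$ gives gap persistence, whence $\cP^{(t+1)}$ refines $\cP^{(t)}$. The one place where you are too quick is the parenthetical about relabelling: under the paper's literal convention the $0$-th vertex is re-determined at each time as the point of least positive location, and since points can cross the origin, the $0$-th vertex at time $t+1$ may be a different physical point lying in a \emph{different} connected component at time $t$---in which case the literal statement can fail (for instance, a two-point component straddling the origin averages to the negative half-line while a large component sits just to its right and inherits the root). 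The statement, and your proof, should therefore be read as tracking a fixed physical root point (the Palm, or typical-point, interpretation), under which your nesting $\cC_{\lambda}^{(t+1)} \subseteq \cC_{\lambda}^{(t)}$ is complete and gives the claimed monotonicity.
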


\begin{proposition}
    Conditionally on the unlabeled connectivity structure of $\cC_{\lambda}^{(t)}$, the $0$-th vertex is uniformly located within $\cC_{\lambda}^{(t)}$. 
    \label{prop:0pt-uniform}
\end{proposition}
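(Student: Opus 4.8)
The plan is to prove the statement at $t=0$ directly from the i.i.d.\ gap structure of $\Xi_\lambda^{(0)}$, and then to show that it propagates to every later $t$ because connected components evolve in isolation from one another.

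\emph{Base case $t=0$.} The component $\cC_\lambda^{(0)}$ occupies $\bZ$-indices $-\ell,-\ell+1,\dots,k_0-1-\ell$ for some $\ell\in\{0,\dots,k_0-1\}$, where $k_0:=\#\cC_\lambda^{(0)}$; its $k_0-1$ internal gaps all have length $\le 1$, it is flanked by gaps of length $>1$, and the $0$-th point $\Xi_{\lambda,0}^{(0)}$ is its $(\ell+1)$-st vertex from the left. Since the gap sequence of $\Xi_\lambda^{(0)}$ is i.i.d.\ with density $f$, for each fixed $\ell$ the event that $\cC_\lambda^{(0)}$ has that index range and ordered internal gaps $(g_1,\dots,g_{k_0-1})$ has density $\big(\prod_{m=1}^{k_0-1}f(g_m)\big)\,\pr(\mathrm{gap}>1)^2$ in the gap variables---an expression that does \emph{not} depend on the split location $\ell$. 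Hence, conditionally on the ordered internal-gap sequence (which already fixes $k_0$), $\ell$ is uniform on $\{0,\dots,k_0-1\}$; that is, $\Xi_{\lambda,0}^{(0)}$ is a uniformly random vertex of $\cC_\lambda^{(0)}$, and since that uniform law is invariant under reversal of the component it survives coarsening the conditioning down to the unlabeled structure. (Equivalently, by a mass-transport argument: for a finite structure $S$ and a vertex-slot $v$, the collection of $v$-th vertices of all components with structure $S$ is a shift-invariant point process with exactly one point per such component, so its intensity is independent of $v$; normalizing over $v$ gives the uniformity.)

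\emph{Propagation.} Since the connected component of a fixed vertex can only shrink in time (Proposition~\ref{prop:comp-size-monotonicity}, together with shift-invariance), no two distinct components ever merge, so at every later time every time-$0$ component stays at distance $>1$ from all others; consequently the entire forward trajectory of a time-$0$ component---the positions of its vertices and the way they regroup into components at each later time---is a deterministic, reversal-equivariant function of that component alone. Condition on the ordered structure $S_0$ of $\cC_\lambda^{(0)}$. The evolution then deterministically partitions its $k_0$ vertices into time-$t$ subcomponents $I_1,\dots,I_m$ with ordered structures $\tau_1,\dots,\tau_m$; by the base case $\Xi_{\lambda,0}^{(0)}$ is a uniformly random one of the $k_0$ vertices, and $\Xi_{\lambda,0}^{(t)}$ lies in the block $I_j$ containing it, so $\cC_\lambda^{(t)}=I_j$. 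Conditioning further on $\{$the structure of $\cC_\lambda^{(t)}$ is $S\}$ confines the distinguished vertex to $W:=\bigsqcup\{I_j:\tau_j=S\}$ and leaves it uniform there; since $W$ is a disjoint union of blocks each of size $\#S$, the rank of a uniform element of $W$ inside its own block is uniform on $\{1,\dots,\#S\}$---which is exactly the assertion that $\Xi_{\lambda,0}^{(t)}$ is uniformly located within $\cC_\lambda^{(t)}$. Averaging over $S_0$, and coarsening to the unlabeled structure as before, gives the claim for every finite $t$; since each connected component reaches its fixed configuration in finite time (Theorem~\ref{thm:limit-existence} with Proposition~\ref{prop:comp-size-monotonicity}), letting $t\to\infty$ also covers the weak limit.

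\emph{Where the difficulty lies.} The delicate step is the base case: one must check that the conditioning is on the structure and nothing more, so that the i.i.d.\ gap sequence genuinely makes the position of $\Xi_{\lambda,0}^{(0)}$ within its block exchangeable---this is transparent once $\Xi_{\lambda,0}^{(0)}$ is treated as a renewal epoch from which the gaps on both sides are i.i.d., while with the stationary rooting one has to absorb the size-biasing of the gap straddling the origin. The propagation step is conceptually routine but leans essentially on the no-merging property of the dynamics and on keeping the multi-component conditioning bookkeeping straight.
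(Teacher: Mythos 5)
Your route is genuinely different from the paper's: the paper disposes of all $t$ at once with a one\nobreakdash-paragraph index\nobreakdash-shift\nobreakdash-invariance argument (relabel the indices of $\Xi_{\lambda}^{(0)}$; the induced geometric graph at time $t$ is unchanged except for the location of the root), which is essentially your parenthetical mass-transport remark, whereas your main line of attack is an explicit gap-density computation at $t=0$ followed by a propagation argument. The propagation step is a worthwhile addition: no merging of components, the fact that a component's forward trajectory depends only on that component, and the observation that a uniform vertex of a deterministically partitioned set is conditionally uniform within the block it lands in, together make explicit something the paper's proof silently assumes when it invokes shift invariance directly ``at time $t$.''

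The problem is the base case, and you have named it yourself without repairing it. Under the paper's rooting ($\Xi_{\lambda,0}^{(0)}$ is the first point in $(0,\infty)$ of a \emph{stationary} renewal process), the gap $\Xi_{\lambda,0}^{(0)}-\Xi_{\lambda,-1}^{(0)}$ straddling the origin is length-biased while all other gaps are generic; which of the gaps in your product is the biased one depends on $\ell$, so the density $\bigl(\prod_m f(g_m)\bigr)\,\pr(\mathrm{gap}>1)^2$ does depend on $\ell$ and the exchangeability breaks. This is not a bookkeeping nuisance that can be ``absorbed'': for example, conditionally on $\cC_{\lambda}^{(0)}$ having exactly two points, the ratio of the probability that $\Xi_{\lambda,0}^{(0)}$ is the left vertex to that of it being the right vertex is $\E[G\ind_{G>1}]\,\pr(G\le 1)\, :\, \E[G\ind_{G\le 1}]\,\pr(G>1)$, with $G$ a generic gap, and this exceeds $1$ for every gap law with $0<\pr(G\le 1)<1$. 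So with the literal rooting the uniformity fails; what is true, and what the later mass-transport corollaries actually need, is the Palm version, in which the root is a renewal epoch with i.i.d.\ generic gaps on both sides --- precisely the setting in which your density computation is correct. In fairness, the paper's own proof has the identical gap: the law of the indexed sequence is \emph{not} invariant under index shifts for this rooting, for the same length-biasing reason. Your write-up would be complete (and would improve on the paper) if you ran the base case under the Palm rooting, or equivalently promoted your mass-transport parenthetical to the primary argument and stated the proposition for a Palm-typical point.
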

\begin{proof}
    Suppose that $\cC_{\lambda}^{(t)}$ has $n$ points labeled in $[-i, j]$.
    By shift invariance, the law of $\Xi_{\lambda}^{(0)}(\omega)$ would be unchanged if we uniformly shift indices by any integer in $[-i, j]$. 
    Such a re-labeling of the points in $\Xi_{\lambda}^{(0)}$ results in the same induced random geometric graph with appropriately re-labeled vertices and the root vertex $0$ in a different location.
    The result follows.
\end{proof}

While it is not mentioned explicitly in its statement, the following corollary relies crucially on the fact that the $0$ point is uniformly located in its connected component at any time $0 \leq t \leq \infty$.
\begin{corollary}
    Let 
    \begin{align*}
        \Big(h_{\lambda}^{(t)}(u, v)\Big)(\omega) := &\#\Big\{\text{points of $\Xi_{\lambda}^{(t)}(\omega)$ in $B_1\big(\Xi_{\lambda, u}^{(t)}(\omega)\big)$}
       \text{ and also in $B_1\big(\Xi_{\lambda, v}^{(t+1)}(\omega)\big)$} \Big\}.
    \end{align*}

    Then 
    \begin{equation}
        \E_{\cC_{\lambda}^{(t)}(\omega)}\left[\sum_{v \in \cC_{\lambda}^{(t)}(\omega)} \Big(h_{\lambda}^{(t)}(v, 0)\Big)(\omega)\right] = \E_{\cC_{\lambda}^{(t)}(\omega)}\left[\sum_{v \in \cC_{\lambda}^{(t)}(\omega)} \Big(h_{\lambda}^{(t)}(0, v)\Big)(\omega)\right].
        \label{eq:MTP}
    \end{equation}
    \label{prop:MTP}
\end{corollary}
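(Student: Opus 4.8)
The plan is to read~\eqref{eq:MTP} as an instance of the mass transport principle, with Proposition~\ref{prop:0pt-uniform} supplying the re-rooting symmetry that makes ``mass out equals mass in.'' Fix $\omega$, write $n := \#\cC_{\lambda}^{(t)}(\omega)$, and for an ordered pair $(u,v)$ of vertices of $\cC_{\lambda}^{(t)}(\omega)$ abbreviate $g(u,v) := \big(h_{\lambda}^{(t)}(u,v)\big)(\omega) \geq 0$. The first point is that $g(u,v)$ depends only on the \emph{unlabeled} data of the component --- the unordered collection $\{(\Xi_{\lambda,i}^{(t)}(\omega),\Xi_{\lambda,i}^{(t+1)}(\omega)) : i \in \cC_{\lambda}^{(t)}(\omega)\}$ of time-$t$ and time-$(t+1)$ positions of its points --- together with the choice of $u,v$; it carries no information about the $\bZ$-indexing. (Any point of $\Xi_{\lambda}^{(t)}$ counted by $h_{\lambda}^{(t)}(u,v)$ lies in $B_1(\Xi_{\lambda,u}^{(t)})$, hence in the same time-$t$ connected component as $u$, so only points of $\cC_{\lambda}^{(t)}$ ever enter the count.)

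The only input that is not pure bookkeeping is a mild strengthening of Proposition~\ref{prop:0pt-uniform}: conditionally on the unlabeled data just described, the $0$-th vertex is uniformly distributed among the $n$ vertices of $\cC_{\lambda}^{(t)}$. This follows from exactly the argument given for Proposition~\ref{prop:0pt-uniform} --- relabeling the points of $\Xi_{\lambda}^{(0)}$ by an integer shift preserves its law, and the \hk dynamics commute with such relabelings --- now applied to the joint labeled configuration at times $t$ and $t+1$ rather than at time $0$ alone; the step one must verify is precisely that the relabeling is a symmetry of this joint configuration, which is immediate from the deterministic, index-equivariant nature of the update rule.

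Granting this, the conclusion is a one-line computation. Let $\cD$ be the $\sigma$-algebra generated by the unlabeled data of $\cC_{\lambda}^{(t)}(\omega)$ and let $\rho$ denote the root (the $0$-th vertex); note that for fixed vertices $v,w$ the quantity $g(v,w)$ is $\cD$-measurable. Since $g \geq 0$, Tonelli's theorem justifies the interchanges, and
\[
\E_{\cC_{\lambda}^{(t)}(\omega)}\!\left[\sum_{v \in \cC_{\lambda}^{(t)}(\omega)} h_{\lambda}^{(t)}(v,0)\right]
= \E\!\left[\,\E\!\left[\sum_{v} g(v,\rho)\,\Big|\,\cD\right]\right]
= \E\!\left[\frac{1}{n}\sum_{w}\sum_{v} g(v,w)\right],
\]
where in the inner conditional expectation $\rho$ is uniform over the vertices $w$ by the strengthened proposition. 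The double sum ranges over all ordered pairs of vertices, so it is unchanged under interchanging the two summation variables; running the identity in reverse with $g(w,v)$ in place of $g(v,w)$ produces $\E_{\cC_{\lambda}^{(t)}(\omega)}\!\big[\sum_{v} h_{\lambda}^{(t)}(0,v)\big]$, which is~\eqref{eq:MTP}.

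The one place I expect to need care --- and the only step beyond routine manipulation --- is the strengthening of Proposition~\ref{prop:0pt-uniform} to the richer state space that records the time-$(t+1)$ positions: one has to make sure the shift/relabeling really is a symmetry of the \emph{joint} configuration at the two consecutive times, which in turn rests on the order-preservation and index-equivariance of the \hk update. Everything else is the standard accounting of the mass transport principle.
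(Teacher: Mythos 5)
Your proposal is correct and follows essentially the same route as the paper: conditioning on the unlabeled locations of the component (where, by the shift-invariance argument of Proposition~\ref{prop:0pt-uniform}, the root is uniform over the vertices), symmetrizing the resulting double sum, and then integrating against the well-defined law $\pr^{(t)}$ of $\cC_{\lambda}^{(t)}$. You merely spell out the details (locality of $h_{\lambda}^{(t)}$ to the component and the deterministic, index-equivariant update giving the joint $(t,t+1)$ symmetry) that the paper's proof leaves implicit.
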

\begin{proof}
    This follows from a similar shift invariance argument as in the proof of Proposition~\ref{prop:0pt-uniform}.
    It is clear from Proposition~\ref{prop:0pt-uniform} that the result holds conditionally on the locations of points (but not the labels) in $\cC_{\lambda}^{(t)}$.
    Since Proposition~\ref{prop:law_C_lambda_t} establishes that the law $\pr^{(t)}$ of $\cC_{\lambda}^{(t)}$ is well-defined, the expectation is also well-defined and the result follows.
\end{proof}

\begin{remark}
    Equation~(\ref{eq:MTP}) is an example of the \emph{Mass Transport Principle}, which has enjoyed successful application to various problems regarding random graphs and point processes.
    For a more detailed discussion, see~\cite{aldous2007processes, blaszczyszyn2017lecture} and the references therein.
    We also refer the reader to~\cite{gwynne2018invariance} for a version of the Mass Transport Principle, which more closely resembles its use in this paper.
\end{remark}

\begin{corollary}\label{cor:int}
    For all $\lambda > 0$, $t \geq 0$, $\E[\#\cC_{\lambda}^{(t+1)}(\omega)] = \E[\#\cC_{\lambda}^{(t)}(\omega)].$ 
    In particular, for all $\lambda > 0$, $t \geq 0$, we have $\E[\#\cC_{\lambda}^{(t)}] = 2\lambda + 1$.
\end{corollary}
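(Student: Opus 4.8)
The statement has two halves: the one–step invariance $\E[\#\cC_\lambda^{(t+1)}(\omega)]=\E[\#\cC_\lambda^{(t)}(\omega)]$, and the identification of the common value with $2\lambda+1$. My plan is to get the first from the Mass Transport Principle and the second from the stationary structure of the initial process.

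For the invariance, the idea is to unpack identity~(\ref{eq:MTP}) of Corollary~\ref{prop:MTP} and recognize its two sides as $\E[\#\cC_\lambda^{(t+1)}]$ and $\E[\#\cC_\lambda^{(t)}]$. The first ingredient is locality: the partition of $\bZ$ into connected components only refines as $t$ increases, because if $\Xi_{\lambda,i}^{(t)},\Xi_{\lambda,i+1}^{(t)}$ are consecutive at distance $>1$ then the left one averages only over points to its left (so moves weakly left), the right one moves weakly right, and the gap stays $>1$; hence every point, every $1$-ball, and every average entering the definitions of $\Xi_{\lambda,0}^{(t)}$ and $\Xi_{\lambda,0}^{(t+1)}$ lies inside $\cC_\lambda^{(t)}$, so the sums in~(\ref{eq:MTP}) are genuinely finite sums over that component. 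The second ingredient is Fubini: $\sum_{v\in\cC_\lambda^{(t)}}\big(h_\lambda^{(t)}(0,v)\big)(\omega)$ counts pairs $(w,v)$ with $w$ a point of $\Xi_\lambda^{(t)}$ in $B_1(\Xi_{\lambda,0}^{(t)})$ and $\Xi_{\lambda,v}^{(t+1)}\in B_1(w)$; summing over $v$ first and using that $\Xi_{\lambda,0}^{(t+1)}$ is exactly the average of the $w$'s being summed collapses this to a count whose expectation is $\E[\#\cC_\lambda^{(t+1)}]$, while symmetrically $\sum_{v\in\cC_\lambda^{(t)}}\big(h_\lambda^{(t)}(v,0)\big)(\omega)$ has expectation $\E[\#\cC_\lambda^{(t)}]$. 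Equating the two via~(\ref{eq:MTP}) yields the invariance; the place where the stationary/shift-invariant setup is essential is Proposition~\ref{prop:0pt-uniform}, which is what lets one read these ``seen from the $0$-point'' sums as Palm (spatial) averages rather than averages at an arbitrarily chosen reference point.

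Iterating the invariance gives $\E[\#\cC_\lambda^{(t)}]=\E[\#\cC_\lambda^{(0)}]$ for every finite $t$, and since $\#\cC_\lambda^{(t)}$ is monotone (Proposition~\ref{prop:comp-size-monotonicity}) and integrable (Proposition~\ref{prop:iid-comps-0}) this extends to $t=\infty$ by dominated convergence. It remains to show that the common value is $2\lambda+1$, and here the plan is to compute $\E[\#\cC_\lambda^{(0)}]$ directly from the stationary renewal structure of $\Xi_\lambda^{(0)}$: its intensity is $\lambda$, so the mean gap is $1/\lambda$ by Proposition~\ref{prop:intensity-gap}, and combining this with the connected-component decomposition and Proposition~\ref{prop:0pt-uniform} (so that $\#\cC_\lambda^{(0)}$ is the size of the component of a \emph{typical} point) a Palm/mass-transport count should express the expected component size purely in terms of $\lambda$, with target value $2\lambda+1$. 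I expect this last step to be the main obstacle — getting the constant to be exactly $2\lambda+1$ rather than merely ``independent of $t$'' — because one must (i) handle the re-rooting carefully, since the index-$0$ point of $\Xi_\lambda^{(t+1)}$ is the first point of $\Xi_\lambda^{(t+1)}$ to the right of the origin and need not be the update image of the index-$0$ point of $\Xi_\lambda^{(t)}$, and (ii) correctly separate the size-biasing introduced by conditioning on the component of a fixed reference point from the genuine Palm average that the mass transport principle sees. Both of these are exactly the phenomena that have no analogue in the finite-interval formulation, which is why the infinite, stationary point-process viewpoint is what makes the bookkeeping go through.
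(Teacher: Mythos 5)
Your skeleton is the same as the paper's: the one-step invariance is attributed to the mass-transport identity~\eqref{eq:MTP}, and the value $2\lambda+1$ to the renewal structure of $\Xi_\lambda^{(0)}$ --- the paper's own proof consists of exactly those two sentences and nothing more. But the two places where you try to supply the missing substance both break down. First, the Fubini step: $\sum_{v}h_\lambda^{(t)}(0,v)$ counts \emph{pairs} $(w,v)$ with $\Xi_{\lambda,w}^{(t)}\in B_1(\Xi_{\lambda,0}^{(t)})$ and $\Xi_{\lambda,v}^{(t+1)}\in B_1(\Xi_{\lambda,w}^{(t)})$, and this does not collapse to $\#\cC_\lambda^{(t+1)}$. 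Test it on an isolated component of two points at distance $d\le 1$: both sides of~\eqref{eq:MTP} equal $4$, while both component sizes equal $2$. The identity~\eqref{eq:MTP} is true, but it equates two pair counts, not the two component sizes; extracting $\E[\#\cC_\lambda^{(t+1)}]=\E[\#\cC_\lambda^{(t)}]$ would require a different transport function, which neither you nor the paper supplies.

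Second, the base case. Your plan is to iterate down to $t=0$ and compute $\E[\#\cC_\lambda^{(0)}]$ from the renewal structure, and you candidly flag this as ``the main obstacle.'' It is worse than an obstacle: Proposition~\ref{prop:iid-comps-0}, which you cite for integrability, already asserts $\#\cC_\lambda^{(0)}\sim\mathrm{Geom}_1(p)$, whose mean is $1/p$ --- equal to $e^{\lambda}$ for a Poisson initial condition, not $2\lambda+1$. The quantity that does equal $2\lambda+1$ is the expected number of points of $\Xi_\lambda^{(0)}$ in the unit ball of a typical point, i.e., exactly the kind of count the $h$-sums actually see, not the size of its connected component. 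Relatedly, combining the claimed equality of expectations with the a.s.\ monotonicity of Proposition~\ref{prop:comp-size-monotonicity}, which you also invoke, would force $\#\cC_\lambda^{(t+1)}=\#\cC_\lambda^{(t)}$ a.s., i.e., that components never split --- contradicting the grouping-into-several-clusters picture used later in the paper. So the gap is not a deferred computation: the route you describe (and, as far as its two sentences reveal, the paper's) conflates the connected component of the $0$-th point with the unit-ball neighborhood count that actually carries the value $2\lambda+1$, and the statement cannot be reached without repairing that mismatch.
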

\begin{proof}
    The first statement follows immediately from Corollary~\ref{prop:MTP}.
    The second statement follows from the fact that $\Xi_{\lambda}^{(0)}(\omega)$ is a simple renewal process of intensity $\lambda$.
    \label{cor:expected-cluster-size}
\end{proof}

\begin{remark}
    Proposition~\ref{prop:MTP} and Corollary~\ref{cor:expected-cluster-size} also hold in the restricted setting of the compact interval.
    (The value of $2\lambda + 1$ in Corollary~\ref{cor:expected-cluster-size} will of course depend on the length of the interval in the restricted setting.)
    However, the proof of Theorem~\ref{thm:weak2R} in Section~\ref{ssec:weak2R-proof} does not hold in the restricted setting.
\end{remark}

\subsection{Proof of Theorem~\ref{thm:weak2R}}
\label{ssec:weak2R-proof}
The fact that $\Xi_{\lambda}^{(t)}$ is stationary for all $0 \leq t \leq \infty$ is immediate from the assumption that $\Xi_{\lambda}^{(0)}$ is stationary and from Proposition~\ref{prop:law_C_lambda_t}.
We now show that the expected gap between clusters in $\Xi_{\lambda}^{(\infty)}$ is $
\frac{2\lambda + 1}{\lambda}$.

\begin{definition}
    A \emph{grouping} is the set of connected components in $\Xi_{\lambda}^{(t)}$ that comprises all points initially in the same connected component in $\Xi_{\lambda}^{(0)}$.
\end{definition}

Notice that the groupings in $\Xi_{\lambda}^{(t)}$ are i.i.d.~for all $0 \leq t \leq \infty$.
Let $G_i$ be the number of points in the $i$-th grouping, for $i \in \bZ$ (When $\Xi_\lambda^{(0)}$ is a Poisson point process, we have $G_1 \sim \mathrm{Geom}(e^{-\lambda})$.).
Let $N_i$ be the number of clusters in $\Xi_{\lambda}^{(\infty)}$ which are comprised of points of $G_i$.

We have the following almost sure equality due to the strong law of large numbers:,
\begin{align*}
    &\lim_{k \to \infty}\frac{G_{-k} + G_{-k + 1} + \ldots + G_k}{N_{-k} + N_{-k + 1} + \ldots + N_k} \\
    &\quad= \lim_{k \to \infty}\frac{\frac{1}{2k+1}(G_{-k} + G_{-k + 1} + \ldots + G_k)}{\frac{1}{2k+1}(N_{-k} + N_{-k + 1} + \ldots + N_k)}
    = \frac{\E G_1}{\E N_1}.
\end{align*}
Using Corollary~\ref{cor:int} and Proposition~\ref{prop:gap} we have almost surely
\begin{align*}
    &\lim_{k \to \infty}\frac{G_{-k} + G_{-k + 1} + \ldots + G_k}{N_{-k} + N_{-k + 1} + \ldots + N_k} \\
    &\quad= \lim_{k \to \infty}\E \frac{G_{-k} + G_{-k + 1} + \ldots + G_k}{N_{-k} + N_{-k + 1} + \ldots + N_k} = 2\lambda + 1.
\end{align*}
Here, the final equality follows since the expected cluster size is $2\lambda + 1$.
Thus $\E G_1 = (2\lambda + 1)\E N_1$.

Let $\Lambda(\cdot)$ denote Lebesgue measure on $\bR$.
Since $\Xi_{\lambda}^{(\infty)}$ has intensity $\lambda$, it follows that if $K_i$ is the (compact) convex hull of the initial connected components of $\Xi_{\lambda}^{(0)}$ indexed in $[-i, -i +1, \ldots, i]$, we have the almost sure limit via Proposition~\ref{prop:intensity-empirical}:
$$\lim_{i \to \infty} \frac{\sum_{j=-i}^{i}G_j}{\Lambda(K_i)} = \lim_{i \to \infty} \frac{\sum_{j=-i}^{i}\E G_j}{\Lambda(K_i)} = \lambda.$$
Using the fact that $\E N_1 = \frac{\E G_1}{2\lambda + 1}$, it follows that 
$$\lim_{i \to \infty} \frac{\sum_{j=-i}^{i}\E N_j}{\Lambda(K_i)} = \frac{\lambda}{2\lambda + 1},$$
which is the intensity of a simple point process consisting of points at the cluster locations of $\Xi_\lambda^{(\infty)}$.
The final result is an immediate consequence of Proposition~\ref{prop:intensity-gap}.


\section{Simulation Evidence and Intuition for Strong $2R$-Conjecture}
\label{sec:strong-2R-evidence}
Recall that the Strong $2R$-Conjecture states that the joint locations of the clusters converges to $\mathrm{Unif}(-1, 1) + 2\bZ$.
The uniform shift is related to the \emph{displacement} of a typical point:
\begin{definition}
    The \emph{displacement} is given by $\eta_{\lambda} := \Xi_{\lambda, 0}^{(\infty)} - \Xi_{\lambda, 0}^{(0)}.$
\end{definition}

\begin{lemma}
    Suppose that the weak limit $\lim_{\lambda \to \infty}\Xi_{\lambda}^{(\infty)}$ exists in distribution and that the gap distribution between clusters in $\Xi_{\lambda}^{(\infty)}$ converges in probability to $2$ as $\lambda \to \infty$.
    Then the displacement of the zero-indexed point converges in distribution to $S + \mathrm{Unif}(-1, 1)$, where $S$ is a symmetric random variable (for all Borel-measurable sets $A \subseteq \bR$, we have $\pr(S \in A) = \pr(S \in -A)$).
\end{lemma}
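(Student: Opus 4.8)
The plan is to exploit the symmetries already developed in the paper—shift invariance and the uniform location of the zero point within its cluster—together with a reflection symmetry of the dynamics, and then add the hypothesized $2\bZ$ structure at the very end. Write $\eta_\lambda = \Xi_{\lambda,0}^{(\infty)} - \Xi_{\lambda,0}^{(0)}$. First I would decompose the displacement as $\eta_\lambda = (C_\lambda - \Xi_{\lambda,0}^{(0)}) + (\Xi_{\lambda,0}^{(\infty)} - C_\lambda)$, where $C_\lambda$ denotes the location of the cluster containing the zero point in $\Xi_\lambda^{(\infty)}$. The first summand, $C_\lambda - \Xi_{\lambda,0}^{(0)}$, is the offset of the zero point from its cluster's limiting position; the second, $\Xi_{\lambda,0}^{(\infty)} - C_\lambda$, is identically zero since co-located points are literally at $C_\lambda$—so in fact $\eta_\lambda = C_\lambda - \Xi_{\lambda,0}^{(0)}$. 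The cleaner decomposition to carry the argument is $\eta_\lambda = \big(C_\lambda - m_\lambda\big) + \big(m_\lambda - \Xi_{\lambda,0}^{(0)}\big)$, where $m_\lambda$ is the midpoint of the convex hull of the grouping (the union of initial connected components) that the zero point belongs to. The point of choosing $m_\lambda$ is that it is a reflection-covariant reference point.

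The core step is a reflection-symmetry argument. The \hk dynamics commute with the reflection $x \mapsto -x$: if one reflects the initial configuration about any fixed point, the whole trajectory and hence the fixed point is the reflected trajectory. Combined with the fact (Proposition~\ref{prop:0pt-uniform}) that, conditionally on the unlabeled connectivity structure of its component, the zero point is uniformly placed, one gets that conditionally on the (unlabeled) grouping, the pair $(\text{offset of } 0 \text{ from } m_\lambda,\ C_\lambda - m_\lambda)$ has a law invariant under simultaneous negation of both coordinates. I would then argue that in the $\lambda \to \infty$ limit the term $C_\lambda - m_\lambda$—the position of the zero point's cluster relative to the grouping midpoint—converges in distribution to some random variable $S$, and symmetry forces $S \equald -S$, i.e. $S$ is symmetric. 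Meanwhile $\Xi_{\lambda,0}^{(0)} - m_\lambda$ is, conditionally on the grouping size and shape, uniform on the grouping's hull by Proposition~\ref{prop:0pt-uniform}; once the gap-distribution-to-$2$ hypothesis is invoked, each grouping breaks into clusters spaced $\approx 2$ apart, so within a single cluster the zero point's offset from $C_\lambda$ is asymptotically $\mathrm{Unif}(-1,1)$, and this is independent of which cluster (hence of $S$) by the conditional-uniformity. Adding the two pieces and reversing signs as appropriate yields $\eta_\lambda \weakc S + \mathrm{Unif}(-1,1)$ with $S$ symmetric and independent of the uniform part.

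Concretely the steps are: (1) set up the reflection map and check the \hk recursion is equivariant under it, so the weak limit is too; (2) use this plus Proposition~\ref{prop:0pt-uniform} to show the joint law of the zero point's within-cluster offset and its cluster's offset from the grouping midpoint is invariant under negating both; (3) take $\lambda \to \infty$ using the assumed existence of $\lim_\lambda \Xi_\lambda^{(\infty)}$ in distribution to extract the limit $S$ of the cluster-offset term and conclude $S$ is symmetric; (4) invoke the gap-converges-to-$2$ hypothesis to identify the within-cluster offset limit as $\mathrm{Unif}(-1,1)$, independent of $S$; (5) assemble $\eta_\lambda = (\text{within-cluster offset}) + (\text{cluster offset from midpoint}) + (\text{midpoint}-\Xi_{\lambda,0}^{(0)})$ correctly and read off the conclusion.

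The main obstacle is step~(3)–(4): making rigorous the passage to the limit and, in particular, the independence of $S$ from the $\mathrm{Unif}(-1,1)$ part. The within-cluster offset and the identity of the cluster are tied together in the finite-$\lambda$ picture (the zero point is uniform over the whole grouping, not over one cluster), and only under the hypothesized $2$-spacing does the grouping factor asymptotically into a choice of cluster and a uniform offset within it; one must control the boundary clusters of each grouping (which need not have full width $2$) and show their contribution is negligible in the limit, which is where the convergence-in-probability of the gap distribution, rather than merely in expectation, is essential. A secondary subtlety is that $S$ is defined only through an assumed weak limit, so the symmetry must be transported through that limit via a continuous-mapping / portmanteau argument rather than asserted at finite $\lambda$.
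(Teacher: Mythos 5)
Your reflection-symmetry idea is in the spirit of the paper's argument (the paper's proof also rests on the exchange symmetry $\Xi_{\lambda}^{(t)}(i) - \Xi_{\lambda}^{(t)}(0) \equald \Xi_{\lambda}^{(t)}(0) - \Xi_{\lambda}^{(t)}(-i)$), but the concrete decomposition you anchor it to does not work. You take as reference point the midpoint $m_\lambda$ of the \emph{grouping} (the initial connected component) containing the zero point. As $\lambda \to \infty$ the probability $p$ that an initial gap exceeds $1$ tends to $0$ (indeed $p \le 1/\lambda$, and $p = e^{-\lambda}$ in the Poisson case), so the grouping has $\mathrm{Geom}_1(p)$ points and its spatial width diverges; this is exactly the point of Remark~\ref{rem:interval-atypical}. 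Consequently both $C_\lambda - m_\lambda$ and $m_\lambda - \Xi_{\lambda,0}^{(0)}$ are of the order of the grouping width, and your step (3) --- extracting a finite symmetric limit $S$ from $C_\lambda - m_\lambda$ --- fails: that quantity is not tight, and no argument is offered that the two diverging terms recombine into the claimed form. (The final assembly also double counts: your three terms sum to $2\eta_\lambda$ or to $0$ depending on sign conventions, not to $\eta_\lambda$.) The paper avoids this by using a \emph{local} anchor, namely the cluster of $\Xi_{\lambda}^{(\infty)}$ nearest the origin, which is tight by construction.

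The second gap is your identification of the $\mathrm{Unif}(-1,1)$ part with the within-cluster offset, i.e.\ the zero point's initial position relative to the cluster it eventually joins. For that to be asymptotically $\mathrm{Unif}(-1,1)$ you need the basin of attraction of each cluster to be, asymptotically, the length-$2$ interval \emph{centered} at that cluster. Nothing in the hypotheses (existence of the $\lambda\to\infty$ weak limit, gaps between clusters $\to 2$ in probability) constrains which points join which cluster, and you cannot get the centering from a conservation argument because the \hk update does not preserve the mean of a connected component (boundary points average over asymmetric neighborhoods); even with $2$-spaced clusters, basins could a priori be off-center or of unequal lengths. Pinning down the basin structure is additional structural information, closer to the content of the Strong $2R$-Conjecture than to the stated hypotheses --- which is why the paper's proof never mentions basins. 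Instead it uses spatial stationarity of $\Xi_{\lambda}^{(\infty)}$: since the covering gap tends to $2$ in probability and the origin sits uniformly inside its covering gap (and $\Xi_{\lambda,0}^{(0)} \to 0$), the nearest cluster to the origin converges to $\mathrm{Unif}(-1,1)$; the leftover term (cluster joined minus nearest cluster) is then symmetric by the left--right exchangeability of the renewal gaps about the zero point together with reflection-equivariance of the dynamics. If you re-anchor your reflection argument at the nearest cluster and drop the basin claim, you recover the paper's proof; as written, steps (3)--(5) do not go through.
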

\begin{proof}
    Recalling that $\Xi_{\lambda}^{(0)}$ is stationary for all $\lambda$, it follows from the dynamics that $\Xi_{\lambda}^{(\infty)}$ is also stationary for all $\lambda$.
    Under the assumed convergence of $\Xi_{\lambda}^{(\infty)}$, any limit must also be stationary.
    The distributional convergence of the displacement is a result of the assumption that the weak limit $\lim_{\lambda \to \infty}\Xi_{\lambda}^{(\infty)}$ exists in distribution.

    Since the gap distribution between clusters is assumed to converge in probability to $2$, it follows from the definition of stationary point processes that the nearest cluster to the right of the origin converges in distribution to $\mathrm{Unif}(0, 2)$.
    As a result, the location of the nearest cluster to the origin converges in distribution to $\mathrm{Unif}(-1, 1)$.
    The result follows from the symmetry that for all $i \in \bZ$, and $0 \leq t \leq \infty$, 
    $
    \Xi_{\lambda}^{(t)}(i) - \Xi_{\lambda}^{(t)}(0) \equald \Xi_{\lambda}^{(t)}(0) - \Xi_{\lambda}^{(t)}(-i).
    $
\end{proof}

In particular, the Strong $2R$-Conjecture states that the random variable $S = 0$ a.s.~in the previous result.
The displacement behavior can be verified through simulation.
Despite the fact that our simulation is of an atypically small connected component lying in an interval of length $30$ (see Remark~\eqref{rem:interval-atypical}), we nevertheless observe that the distribution of the displacement increases almost linearly within the interval $[-1, 1]$, which matches the Strong $2R$-Conjecture.

\begin{figure}
    \centering
    \includegraphics[width=0.7\linewidth]{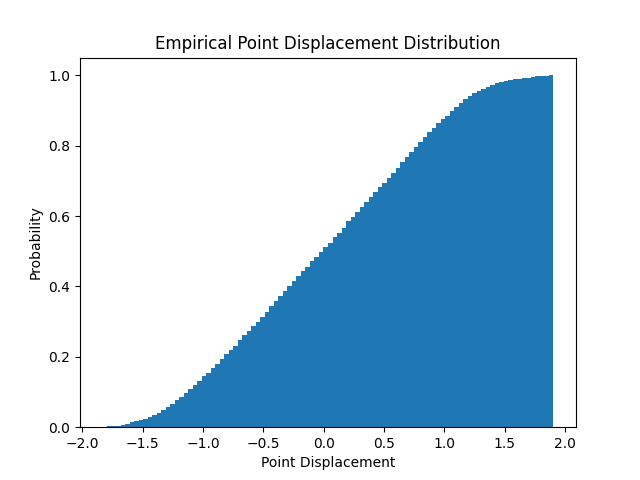}
    \caption{The empirical distribution of point displacements in a simulation with initial interval length $30$ and an initial Poisson point process of intensity $250$.}
    \label{fig:emp-displacement}
\end{figure}


\section{Future Work}
\label{sec:future-work}
In this paper, we extend the \hk model to a point-process-valued state space, and use various symmetries in that space to partially resolve the $2R$-Conjecture with elementary methods.
There are several interesting and important directions for future work based on the techniques presented in this paper.

\begin{enumerate}[left=0pt, label={\alph*}.]
    \item \textbf{Proving the Strong $2R$-Conjecture: } First and foremost, the Strong $2R$-Conjecture remains unproven, although the methodology and framework used in this paper allow us to give the first precise statement of this conjecture.
    Although our use of techniques based on the mass transport principle and the strong law of large numbers does not readily facilitate an understanding beyond expectations, the interplay between point process and random geometric graph descriptions of the dynamics may be a fruitful avenue for future work.
    Unlike~\cite{dragovic2020poisson, khaniha2025hierarchical, angel2023tale}, the limit in this problem has been identified, and thus the \hk model and related models may be good candidates through which one can develop appropriate tools.
    
    \item \textbf{Model Generalizations -- Convergence and Fixed Point Characterization}
    Typical methods for proving convergence of generalizations of the \hk models have focused on the restricted setting of finitely many points located within a compact set.
    Typically, convergence has been studied using Lyapunov methods, which do not aid in characterizing the fixed point.
    Based on the current status of the literature, we currently lack tools to characterize the structure of the fixed points of generalizations of the \hk model.
    
    The extended setting in this model could provide a unified framework through which one can simultaneously understand the convergence to and structure of the fixed point.
    
    \begin{itemize}[left=0pt]
        \item \textbf{Inhomogeneous \hk Models:} In the Inhomogeneous \hk model, each agent uses its own (i.i.d.) radius for the ball in which it averages over other agents.
        Chazelle~\cite{chazelle2015diffusive} has demonstrated that those models can exhibit chaotic behavior within a compact set from certain initial conditions.
        The existence of a weak limit remains an open problem.
        We believe that when the initial condition is a Poisson point process, a weak limit should exist and the dynamics should not exhibit chaotic behavior.
        
        \item \textbf{Multi-Dimensional \hk Models:} The convergence of the multidimensional \hk model has been shown for finitely many points.
        However, in more than one dimension, percolation occurs, which means that the existence of the weak limit in the extended setting is no longer a simple consequence of the convergence of the finite model.
        Even the existence of a weak limit in our extended model is an open problem.
        While our mass transport argument extends to multiple dimensions, the more general result is not immediate, as it is not obvious whether or not a percolated state can persist under the \hk dynamics.
        The persistence of a percolated state at all finite times does not preclude a grid structure for $\lim_{\lambda}\lim_{t}\Xi_\lambda^{(t)}$.
        As a result, the generalization to multiple dimensions may require different techniques from those used in this paper.
    \end{itemize}
\end{enumerate}


\bibliographystyle{plain}
\bibliography{references}

\end{document}